\newtheorem{proposition}{Proposition}
\begin{document}

\begin{frontmatter}

\title{Performance Tradeoff in a Unified System of Communications and Passive Radar: A Secrecy Capacity Approach}
\author{Batu K. Chalise and Moeness G. Amin}
\tnotetext[]{B. K. Chalise is with the Department of Electrical and Computer Engineering, New York Institute of Technology, USA, email: batu.k.chalise@ieee.org.}
\tnotetext[]{M. G. Amin is with the Center for Advanced Communications, Villanova University, PA, USA, email: moeness.amin@villanova.edu.} 
\tnotetext[]{The corresponding Author's email address is batu.k.chalise@ieee.org.}
	\tnotetext[]{Please cite this article in press as: B. K. Chalise, M. G. Amin, ``Performance tradeoff in a unified system of communications and passive radar: A secrecy capacity approach,'' Digit. Signal Process. (2018), \url{https://doi.org/10.1016/j.dsp.2018.06.017}}
\begin{abstract}
In a unified system of  passive radar and communication systems of joint transmitter platform, information intended for a communication receiver may be eavesdropped by a passive radar receiver (RR), thereby undermining the security of communications system. To minimize this information security risk, in this paper, we propose a unified passive radar and  communications system wherein the signal-to-interference and noise ratio (SINR) at the RR is maximized while ensuring that the information secrecy rate is above a certain threshold value. We consider both scenarios wherein transmissions of the radar waveform and information signals are scheduled with the disjoint (non-overlapping case) as well as with the same set of resources (overlapping case). In both cases, the underlying optimization problems are non-convex. In the former case, we propose alternating optimization (AO) techniques that employ semidefinite programming and computationally efficient semi-analytical approaches. In the latter case,  AO method based on semi-definite relaxation approach is proposed to solve the optimization problem. By changing the threshold value of the information secrecy rate, we then characterize the performance tradeoff between passive radar and communication systems with the boundaries of the SINR-secrecy capacity regions. The performance comparison of the proposed optimization methods demonstrate the importance of  the semi-analytical approach and the advantage of overlapping case over non-overlapping one.  
\end{abstract}

\begin{keyword}
Secrecy rate, joint passive radar and communications, tradeoff analysis, semi-analytical approach, semi-definite relaxation
\end{keyword}

\end{frontmatter}

\section{Introduction}  
Radar sensing and wireless communications are the two most prominent techniques that are based on similar radio frequency phenomena and can be characterized with similar signal processing techniques \cite{Davis}. However, a  radar system's typical goal is to detect, localize, and track targets, whereas the goal of communication systems is to maximize information transfer and enhance its reliability. Due to different objectives, hardware configurations, power and bandwidth requirements,  and frequency bands of operations, these two systems have been independently considered and developed as two separate entities. However,  due to an ever increasing number of wireless devices and networks as well as demand for high speed multimedia data services,  it is important for the two systems to share common spectrum and enhance bandwidth utilization via improved spectrum congestion techniques. In this regard,  some frequency spectrum, e.g.,  2-4 GHz range,  has been allocated for both radar and communication systems, such as  Long Term Evolution (LTE) \cite{3GPPspec1}. When two systems share the same frequency band, techniques  such as opportunistic spectrum sharing \cite{Correia}, dual-function radar-communications (DFRC)  \cite{Blunt}, \cite{ HassanienAmin}, and cooperation between radar and communication systems \cite{Surender}, \cite{Athina} have been proposed to minimize the inter-system interference and enhance the performance of both systems.\\
\hspace*{0.5cm} On the other hand,  passive radar systems (PRS) have received significant research interests due to their low cost, covertness, and availability of a large number of illumination sources, such as cellular base stations and television stations \cite{Poullin},  \cite{Salah1}. To this end, the authors of \cite{Chal2}-\cite{Ssuvedi} have proposed several algorithms for detecting, localizing, and tracking targets in PRS.  In \cite{Cui}, the detector based on the generalized likelihood ratio test (GLRT) has been proposed for PRS consisting of a single transmitter and a single receiver, whereas the corresponding GLRT detectors  for multiple-input multiple-output (MIMO) PRS have been developed  in \cite{Hack}-\cite{HackPhd}. While these papers assume multi-frequency networks,  an extension to single-frequency multi-static PRS has been proposed in \cite{ChaliseHimedSP}. \\
\hspace*{0.5cm} Recent advancements in PRS (especially in the case of single-frequency multi-static scenario)  demonstrate that the estimation of the non-cooperative transmitters'  waveforms  is challenging and significantly affects the  performance of the PRS. In particular, 
 the performance of the PRS approaches that of active radars \cite{ChaliseHimedSP}, if the waveform estimation is sufficiently accurate.  Motivated from this fact, the authors in \cite{SPL2017} propose to develop PRS as a part of a bandwidth-flexible communication system \cite{SPL2017},  where the transmitters no longer remain completely non-cooperative, and  in fact, assist the radar receiver in estimating the broadcast signals more efficiently through improved resource allocations. The single joint radar and communications transmitter proposed in \cite{SPL2017} is recently extended to a scenario of multiple transmitters in \cite{RadarConf2018}. However, in both papers,  information security is not considered, each transmitter is equipped with only a single antenna, and the radar and information signals are transmitted through orthogonal channels (non-overlapping case).

Security in wireless communications is a critical issue, since wireless channels are often prone to eavesdropping. To this end, based on the seminal work of  \cite{Wyner75},  information theoretic physical layer design approaches for enhancing security in wireless systems have been widely studied in the literature \cite{Hellman}-\cite{QiangLiJSAC}. Physical layer security approach aims to prevent unintended users from decoding information transmitted to the intended users by maximizing information {\it secrecy rate}. The advantage of this approach is that secrecy can be achieved  without using an encryption key. On the other hand, information theoretic metrics have been also used in the design and analysis of radar systems \cite{YangBlum}, \cite{Mathini}. To this end, the authors in \cite{Chambers2017} consider a  monostatic MIMO radar system,  wherein the objective is to enhance radar performance and secure information transmitted to a legitimate communication receiver from an eavesdropper-target. For this purpose, beamforming vectors, applied to communication and distortion signals, are jointly optimized. A Taylor series approximation approach \cite{Giannakis} is proposed to convexify the non-convex function of secrecy rate.  However, to the best of our knowledge, the problem of designing algorithms for a unified system of passive radar and communications, while emphasizing information security has not been investigated in the literature. This problem is important in a unified system since information signals intended for a communication receiver (CR) may be eavesdropped by a  passive radar receiver (RR), thereby undermining information security. Moreover, in contrast to \cite{Chambers2017},  our objective is to  jointly optimize radar waveforms and covariance matrix of information signals without additionally transmitting distortion signal.  

\hspace*{0.5cm}   In this paper, we consider a unified system consisting of  a transmitter, a passive RR, and a CR, each equipped with multiple antennas. The performance tradeoff  between radar and communications is characterized by obtaining the boundaries of the signal-to-interference-and-noise ratio (SINR) for the RR versus information secrecy rate region, when considering the same RR as an eavesdropper{\footnote{In general, information security should be achieved against all eavesdroppers, including the RR. While such design approach will be reported in our future work, it is worthwhile to mention that the RR's eavesdropping capability is higher than that of any other eavesdropper, since, as a part of the unified system, the RR has more knowledge about the settings, parameters, and protocols of the unified system.}}. To this end, joint optimization of radar waveforms and transmit covariance matrix of information signals is proposed with the objective of maximizing the SINR at the RR, while ensuring that the information secrecy rate is above a certain threshold. We formulate the underlying non-convex optimization problems and  provide corresponding solutions when the radar and information signals use both  orthogonal and non-orthogonal (overlapping) sets of resources. In both cases, iterative alternating optimization (AO) methods that employ semi-definite programming (SDP)/semi-definite relaxation (SDR) are proposed for optimizing radar waveforms and transmit covariance matrices{\footnote{These optimization techniques form the basis for solving several problems in other contexts, primarily in the design of communication only systems (see \cite{QiangLiJSAC}, \cite{ChaliseVandendorpe}, and references therein). We propose to  leverage these techniques for the joint transmitter design in a unified system of passive radar and communications.}}. However, in the former case, a computationally efficient semi-analytical approach is also proposed. Simulation results show that this approach provides significant performance gains over the SDP-based approach. Moreover,  in spite of interference caused in the overlapping method, due to joint optimization of radar waveforms and transmit covariance matrix, results show that the overlapping method provides better performance than the non-overlapping one.  

The remainder of this paper is organized as follows: Section \ref{sec1} presents the system model of unified passive radar and communications. Section \ref{sec2} provides problem formulations  and corresponding solutions for the optimization problems of the non-overlapping case.  The problem formulation and optimization method for the overlapping case are presented in Section \ref{sec3}. Numerical results are provided in Section \ref{sec4}. Finally, Section \ref{sec5} concludes the paper and summarizes the key findings.
 
{\it Notations:}  Upper (lower) bold face letters will be used for matrices (vectors);  $(\cdot)^{H}$, ${\bf I}_N$, $||{\cdot}||$, and  $\otimes$ denote Hermitian transpose,  $N\times N$  identity matrix, Euclidean norm for vector/Frobenius norm for matrix, and  Kronecker product operator, respectively. ${\rm tr}({\bf X})$ and  ${\rm det}({\bf X})$ denote trace and determinant of a matrix ${\bf X}$, respectively, ${\bf X}\succeq 0$ denotes that ${\bf X}$ is a positive semi-definite matrix, and ${\rm vec}({\bf X})$ denote the vectorization of ${\bf X}$. ${\mathcal C}^{N\times M}$ stands for a space of complex matrix of dimension $N\times M$, and ${\mathcal N_C}(\mu, \sigma^2)$ and ${\mathbb E}\{ \}$ denote circularly symmetric complex Gaussian distribution with mean $\mu$ and variance $\sigma^2$ and expectation operation, respectively. 

  %{\footnote{The tradeoff analysis is proposed in \cite{Bliss} in a limited scenario of joint radar and communications relay, and in \cite{ChaliseWing} for wireless transfer of information and energy.}}. 
  
 \section{System Model}
 \label{sec1}
 Consider a system that supports both communications and radar receivers, as shown in Fig. \ref{fig:blkdiag1}. The transmitter and CR are equipped with $N_t$ and $M$ antennas, respectively.  The antennas of the  RR are divided into groups of direct channel (DC) antennas and surveillance channel (SC) antennas. Without loss of generality, we assume that the same, i.e., $N$ antennas are used for the DC and SC. The DC antennas receive signals via direct path from the transmitter, whereas the SC antennas receive signals originating from the transmitter but reflected by a target.  The direct path signal is used at the RR for estimating the radar waveform as in the case of PRS. Although the transmitter is not non-cooperative to the RR (in contrast to the conventional PRS), signal transmissions to the radar and communication receivers may be scheduled using totally disjoint, partially overlapping, and completely overlapping groups of resource elements (time-frequency units) \cite{3GPPspec1}. The optimum scheduling of the resource elements for the transmission of communication signal will change due to channel fading. This will, in turn, change the scheduling of  the resource elements for the transmission of radar waveforms, since the total available resource is the same in unified system. Moreover, since it becomes costly for the transmitter to  let the RR estimate the  transmitted radar waveform after every change of scheduling, we consider that  the RR estimates the radar waveform using  the direct path channel. We consider that the RR can achieve synchronization like the communication receivers  which achieve synchronization by detecting dedicated primary and secondary synchronization signals \cite{Sriharsha} transmitted by the transmitter, for example in LTE systems. As such, no additional communication links between the radar and communication receivers are required for maintaining synchronization. We also assume a clutter-free noise-only environment considering that the effect of the clutter-path signals can be mitigated by applying a variety of techniques (see \cite{Hack}-\cite{HackPhd} and references therein){\footnote{ The tradeoff analysis of the unified system  in the presence of clutters will be presented in our future work.}}.
 \begin{figure}[htb!] 
\centering
 \includegraphics[scale=0.5]{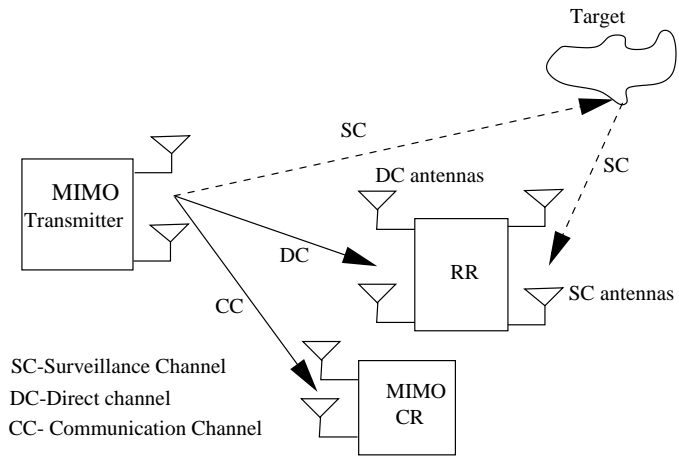}
%\vspace{-2cm}
\caption{A unified system with a transmitter, a RR and a CR, all equipped with multiple antennas } 
\label{fig:blkdiag1}
\end{figure}

The transmitter uses a portion of its total system power, $P_{T}$,  to broadcast the radar waveform  and the remaining portion to transmit  an information signal. We consider both orthogonal and non-orthogonal cases of signal transmissions. In the former, we assume that the signal transmissions from the transmitter are scheduled optimally using non-overlapping groups of resource elements. In the latter case, we relax this approximation and consider that the transmitter broadcasts radar and communication signals using the same resource elements. The tradeoff analysis is conducted by solving an optimization problem,  where the objective is to maximize the SINR at the RR while ensuring that the  information rate for the CR is above a certain threshold value. 
 
 The signal received by the RR, at a time instant $l$,  via direct path channel is expressed as
 \begin{eqnarray}
 \label{eqn1}
{\bf x}_d[l] =\gamma_d {\bf a}_d(\theta_r) {\bf a}_d^H(\theta_t){\bf s}_r[l]+{\bf v}_{dr}[l],
 \end{eqnarray}
 where $\theta_r$ is the direction of arrival (DoA) of the signal received via the direct path channel when reference is DC-antennas, and $\theta_t$ is the direction of departure (DoD) of the signal when reference is transmitter-antennas.  ${\bf a}_d(\theta_r) \in  {\mathcal C}^{N \times 1}$  and  ${\bf a}_d(\theta_t) \in {\mathcal C}^{N_t \times 1}$ denote the steering vectors corresponding to $\theta_r$ and $\theta_t$, respectively. $\gamma_d$ is the channel coefficient of the direct path channel between the transmitter and RR, and ${\bf s}_r[l] \in  {\mathcal C}^{N_t \times 1}$ is the radar waveform transmitted by the transmitter for the $l$th time instant. ${\bf v}_{dr}[l]  \in  {\mathcal C}^{N \times 1}$ denotes additive Gaussian noise at the DC-antennas of the RR, which is assumed to have zero mean and covariance of $\sigma_r^2 {\bf I}_{N}$, i.e., ${\mathcal N_C}({\bf 0}, \sigma_r^2 {\bf I}_{N})$. 
 
Collecting $l=1, \cdots, L$ vectors of ${\bf x}_d[l]$, we obtain matrix ${\bf X}_d=[{\bf x}_d[1], \cdots, {\bf x}_d[L]]$ of size $N\times L$. Let  ${\bf S}_r=[{\bf s}_r[1], \cdots, {\bf s}_d[L]] \in {\mathcal C}^{N_t\times L}$ and   ${\bf V}_{dr}=[{\bf v}_{dr}[1], \cdots, {\bf v}_{dr}[L]] \in {\mathcal C}^{N\times L}$. Then, ${\bf X}_d$ can be expressed as
\begin{eqnarray}
 \label{eqn2}
 {\bf X}_d={\bf H}_d{\bf S}_r+{\bf V}_{dr} \longrightarrow {\bf x}_d={\bf A}(\theta_d){\bf s}_r+{\bf v}_{dr}, 
\end{eqnarray}
where ${\bf H}_d=\gamma_d {\bf a}_d(\theta_r) {\bf a}_d^H(\theta_t)$, ${\bf x}_d={\rm vec}( {\bf X}_d) \in {\mathcal C}^{LN \times 1}$,  ${\bf s}_r={\rm vec}( {\bf S}_r) \in {\mathcal C}^{L N_t \times 1}$, ${\bf v}_{dr}={\rm vec}( {\bf V}_{dr}) \in {\mathcal C}^{LN \times 1}$,  ${\bf A}(\theta_d)=\gamma_d\left({\bf I}_L\otimes \left[ {\bf a}_d(\theta_r) {\bf a}_d^H(\theta_t) \right]\right) \in {\mathcal C}^{LN\times L N_t}$, and use the fact that ${\rm vec}({\bf H}_d{\bf S}_r{\bf I}_L)=({\bf I}_L \otimes {\bf H}_d){\bf s}_r$ \cite{MatrixCookbook}. 

On the other hand, the received signal vector at the $l$th time through surveillance channel can be expressed as
 \begin{eqnarray}
 \label{eqn3}
{\bf x}_s[l]=\gamma_{t} {\bf a}_{s}(\theta_{r,0}){\bf a}^H_{s}(\theta_{t,0}){\bf s}_r[l]+{\bf v}_{sr}[l],
\end{eqnarray}
where $\theta_{r,0}$  is the DoA of the signal received via the surveillance channel when reference is SC-antennas, and $\theta_{t,0}$ is the DoD of the signal when reference is transmitter-antennas.  ${\bf a}_s(\theta_{r,0}) \in  {\mathcal C}^{N \times 1}$  and  ${\bf a}_s(\theta_{t,0}) \in {\mathcal C}^{N_t \times 1}$ denote steering vectors corresponding to $\theta_{r,0}$ and $\theta_{t,0}$, respectively. $\gamma_t$ is the channel coefficient of the surveillance channel between the transmitter and RR and includes the effects of bi-static attenuation from the transmitter to the RR as well as target's reflection coefficient. ${\bf v}_{sr}[l]  \in  {\mathcal C}^{N \times 1}$ denotes additive Gaussian noise at the SC-antennas of the RR, which is also assumed to have zero mean and covariance of $\sigma_r^2 {\bf I}_{N}$, i.e., ${\mathcal N_C}({\bf 0}, \sigma_r^2 {\bf I}_{N})$. Following similar steps as in the derivation of (\ref{eqn2}) from  (\ref{eqn1}) and collecting $l=1, \cdots, L$ vectors of ${\bf x}_s[l]$ in a long vector ${\bf x}_s$, we have
 \begin{eqnarray}
 \label{eqn4}
{\bf x}_s={\bf A}_s(\theta_0){\bf s}_r+{\bf v}_{sr},
\end{eqnarray}
where  ${\bf A}_s(\theta_0)=\gamma_t\left( {\bf I}_L\otimes \left[ {\bf a}_{s}(\theta_{r,0}) {\bf a}_{s}^H(\theta_{t,0}) \right] \right)$, ${\bf v}_{sr}={\rm vec}\left[ {\bf v}_{sr}[1], \cdots, {\bf v}_{sr}[L] \right] \in {\mathcal C}^{LN\times 1}$, and 
${\bf x}_s={\rm vec}\left[ {\bf x}_{s}[1], \cdots, {\bf x}_{s}[L] \right] \in {\mathcal C}^{LN\times 1}$. The RR utilizes the direct path channel to estimate ${\bf s}_r$ by using matched filtering. As such, the estimated ${\hat {\bf s}}_r={\bf A}^H(\theta_d) {\bf x}_d$ can be expressed as
\begin{eqnarray}
 \label{eqn5}
{\hat {\bf s}}_r={\bf A}^H(\theta_d){\bf A}(\theta_d){\bf s}_r+{\bf A}^H(\theta_d){\bf v}_{dr}.
\end{eqnarray}
Noting that $({\bf A}\otimes {\bf B}) ({\bf C}\otimes {\bf D})=({\bf A C}\otimes {\bf B D})$ \cite{MatrixCookbook}, ${\bf A}^H(\theta_d){\bf A}(\theta_d)$ can be expressed as
\begin{eqnarray}
 \label{eqn6}
{\bf A}^H(\theta_d){\bf A}(\theta_d)=N|\gamma_d|^2\left( {\bf I}_L \otimes ({\bf a}_d(\theta_t) {\bf a}^H_d(\theta_t))\right)\triangleq {\bf A}_d(\theta_t).
\end{eqnarray}
Note that the estimated ${\hat {\bf s}}_r$ may not be noise free. The effect of noisy direct channel into radar's performance can be taken  into account by considering that this estimated ${\hat {\bf s}}_r$ is the true ${\bf s}_r$. Thus, substituting  (\ref{eqn5}) into   (\ref{eqn4}),  the received surveillance channel signal can be expressed as 
\begin{eqnarray}
\label{eqn7}
{\bf x}_s={\bf A}_s(\theta_0){\bf A}_d(\theta_t){\bf s}_r+  {\bf A}_s(\theta_0){\bf A}^H(\theta_d){\bf v}_{dr}+ {\bf v}_{sr}.
\end{eqnarray}
The signal ${\bf x}_s$ can be processed with a linear operator  ${\bf w} \in {\mathcal C}^{LN\times 1}$ which can be considered as a vectorized form of a spatio-temporal matrix of size $N\times L$. 
The resulting decision metric ${\tilde x}_s={\bf w}^H{\bf x}_s$  can be expressed as
\begin{eqnarray}
\label{eqn8}
{\tilde x}_s={\bf w}^H{\bf A}_s(\theta_0){\bf A}_d(\theta_t){\bf s}_r+  {\bf w}^H{\bf A}_s(\theta_0){\bf A}^H(\theta_d){\bf v}_{dr}+{\bf w}^H {\bf v}_{sr}.
\end{eqnarray}
As such, the SINR at the RR  is expressed as
 \begin{eqnarray}
\label{eqn9}
\gamma_{R}=\frac{1}{\sigma_r^2} \frac{ |{\bf w}^H{\bf A}_s(\theta_0){\bf A}_d(\theta_t){\bf s}_r|^2}{  ||{\bf w}^H{\bf A}_s(\theta_0){\bf A}^H(\theta_d)||^2+{\bf w}^H{\bf w}}.
\end{eqnarray}

 We assume block fading channel between the transmitter and CR, i.e., the  communication channel remains constant for a block of symbols and changes independently from one block to another. For conciseness, without loss of generality, this block length is considered to be $L$. Since we first consider the case in which signal transmissions to the RR and CR employ different set of non-overlapping resource units and the communication channel does not change over $L$ symbols, the  information rate for the CR can be obtained by considering the channel at a specific time instant{\footnote{As it will be clear later in the paper, that this is not true when radar and informations signals occupy the same set of resources.}}. As such, the signal received by the CR can be expressed as
\begin{eqnarray}
\label{eqn10}
{\bf x}_c={\bf H}_c {\bar {\bf s}}_c+{\bf v}_c,
\end{eqnarray}
where ${\bf H}_c \in {\mathcal C}^{M\times N_t}$ is the MIMO channel between the transmitter and CR,  ${\bar {\bf s}}_c \in {\mathcal C}^{N_t\times 1}$ is the vector of data transmitted from $N_t$ antennas at a given time instant, and ${\bf v}_c \in {\mathcal C}^{M \times 1}$ is additive Gaussian noise at the antennas of CR. Each element of  ${\bf v}_c $ is assumed to be distributed as ${\mathcal N_C}(0, \sigma_c^2)$. Since information intended for the CR will be also received by the RR, any confidential information can be decoded by the RR, thereby undermining the security of the communication system. Considering that the direct channel is much stronger than the surveillance channel, the information received by the RR through the direct path channel can undermine the security most. Therefore, we mainly focus on the information signal leaked by the transmitter to the RR through the direct path channel.  The received information signal at the RR  is given by
\begin{eqnarray}
\label{eqn11}
{\tilde {\bf x}}_r={\bf H}_d {\bar {\bf s}}_c+{\bar {\bf v}}_{dr},
\end{eqnarray}
 ${\bar {\bf v}}_{dr} \in {\mathcal C}^{N \times 1}$ is the additive Gaussian noise at the DC-antennas of the RR. The capacity of the transmitter-CR channel is given by
\begin{eqnarray}
\label{eqn12}
C_c=\log_2\left( {\rm det}\left(\sigma_c^{-2} {\bf H}_c{\bf Q}_c {\bf H}_c^H+{\bf I}_{M}\right)\right),
\end{eqnarray}
where ${\bf Q}_c={\mathbb E}\left\{ {\bar {\bf s}}_c{\bar {\bf s}}_c^H\right\} \in {\mathcal C}^{N_t \times N_t}$. The capacity of the transmitter-RR channel is given by
\begin{eqnarray}
\label{eqn13}
C_r=\log_2\left( {\rm det}\left(\sigma_r^{-2} {\bf H}_d{\bf Q}_c {\bf H}_d^H+{\bf I}_{N}\right)\right). 
\end{eqnarray}
The secrecy capacity is given by $C_s=\max(0, C_c-C_r)$ \cite{Hassibi} , where
\begin{eqnarray}
\label{eqn13N}
 C_s=\max\left(0,\log_2\left( {\rm det}\left(\sigma_c^{-2}{\bf H}_c{\bf Q}_c {\bf H}_c^H+{\bf I}_{M}\right)\right)- \log_2\left( {\rm det}\left( \sigma_r^{-2} {\bf H}_d{\bf Q}_c {\bf H}_d^H+{\bf I}_{N}\right)\right)\right).
 \end{eqnarray}
 
%\vspace*{-0.3cm}
\section{Proposed Optimization}
\label{sec2}
Given a total system power, $P_T$, the objective is to maximize the received SINR at the RR while maintaining the secrecy capacity above a certain threshold value. We assume that the channels from the transmitter to the CR and RR (direct path) can be estimated with sufficient accuracy using channel acquisition techniques proposed in \cite{ChaliseZhangAminTSP}. Moreover, we assume that noise powers at all receiver terminals are known. Since these parameters may not be perfectly known, the performance results presented in this paper will serve as upper bounds for the performance of the underlying system in practice. The optimization will be solved for each hypothesized target position (or equivalently range-Doppler cell) and updated after  each coherence time of the communication channel. The secrecy rate threshold is a user specific parameter and depends on the requested level of security, i.e., usually a  larger value of the threshold is selected for a communication link requiring the higher priority in information security.
Mathematically, the proposed optimization can be expressed as
\begin{eqnarray}
\label{eqn14}
\max_{{\bf w}, {\bf s}_r, {\bf Q}_c \succeq 0} & & \gamma_R \nonumber\\
{\rm s.t.} & & C_s\geq r_m, \\
& & {\rm tr}({\bf Q}_c)+{\bf s}_r^H{\bf s}_r \leq P_T, \nonumber
\end{eqnarray} 
where $r_m$ is the threshold value of information secrecy rate. The constraints of the optimization problem (\ref{eqn14})  do not depend on ${\bf w}$. For a given ${\bf s}_r$, the optimization w.r.t. ${\bf w}$ can be expressed as
\begin{eqnarray}
\label{eqn15}
\max_{{\bf w} } \frac{{\bf w}^H {\bf A}_s(\theta_0){\bf A}_d(\theta_t){\bf s}_r{\bf s}_r^H {\bf A}^H_d(\theta_t) {\bf A}^H_s(\theta_0) {\bf w}}{{\bf w}^H \left[{\bf A}_s(\theta_0){\bf A}_d(\theta_t){\bf A}^H_s(\theta_0)+{\bf I}_{LN}\right] {\bf w}}=
\max_{{\bf w} } \frac{|{\bf s}_r^H {\bf D}{\bf w}|^2}{{\bf w}^H {\bf C}{\bf w}}
\end{eqnarray}
where
\begin{eqnarray}
\label{eqn16}
{\bf C}&\triangleq & {\bf A}_s(\theta_0){\bf A}_d(\theta_t){\bf A}_s^H(\theta_0)+{\bf I}_{LN}, \nonumber\\
{\bf D} &\triangleq & {\bf A}_d^H(\theta_t) {\bf A}_s^H(\theta_0).
\end{eqnarray}
This maximization problem in (\ref{eqn15}) can be equivalently expressed as
\begin{eqnarray}
\label{eqn17}
\min_{\bf w} & &  {\bf w}^H{\bf C}{\bf w} \nonumber\\
{\rm s.t.} & & {\bf s}_r^H {\bf D} {\bf w}=1. 
\end{eqnarray}
Employing Lagrangian multiplier function approach \cite{MohammadChalise}, it can be shown that the optimum ${\bf w}$ in (\ref{eqn17}) is
\begin{eqnarray}
\label{eqn18}
{\bf w}=\frac{{\bf C}^{-1}{\bf D}^H{\bf s}_r}{{\bf s}_r^H{\bf D}{\bf C}^{-1} {\bf D}^H{\bf s}_r}. 
\end{eqnarray}
Substituting (\ref{eqn18}) into $\gamma_R$,  the resulting SINR can be expressed as
\begin{eqnarray}
\label{eqn19}
\gamma_R=\frac{1}{\sigma_r^2} {\bf s}_r^H{\bf D}{\bf C}^{-1}{\bf D}^H{\bf s}_r. 
\end{eqnarray}
Without loss of generality, we assume that $r_m\geq 0$. This means that the constraint $C_s\geq r_m$ can be cast as $C_c-C_r \geq 0$. Substituting (\ref{eqn19}) into (\ref{eqn14}), we obtain the following optimization
\begin{eqnarray}
\label{eqn20}
\max_{{\bf s}_r, {\bf Q}_c \succeq 0} & & {\bf s}_r^H{\bf D}{\bf C}^{-1}{\bf D}^H{\bf s}_r \nonumber\\
{\rm s.t.} & &  \biggl\{ \log_2\left( {\rm det}\left({\bf H}_c{\bf Q}_c {\bf H}_c^H+\sigma_c^2{\bf I}_{M}\right)\right)-\log_2\left( {\rm det}\left({\bf H}_d{\bf Q}_c {\bf H}_d^H+\sigma_r^2{\bf I}_{N}\right)\right) + c_a \biggr\} \geq r_m, \\
& & {\rm tr}({\bf Q}_c)+{\bf s}_r^H{\bf s}_r \leq P_T, \nonumber
\end{eqnarray}
where $c_a=\log_2\sigma_r^{2N}-\log_2\sigma_c^{2M}$ and the first constraint is not a function of  temporal structure of ${\bf s}_r$. Let ${\bf s}_r=\sqrt{P_r}{\bar {\bf s}}_r$, where ${\bar {\bf s}}^H_r{\bar {\bf s}}_r=1$ and $P_r$ is the power allocated for the radar waveform. Then, it is clear that the optimum ${\bar {\bf s}}_r$ is the eigenvector corresponding to the largest eigenvalue of ${\bf D}{\bf C}^{-1}{\bf D}^H$.  In this case, the objective function of (\ref{eqn20}) can be expressed as 
\begin{eqnarray}
\label{eqn21}
{\bf s}_r^H{\bf D}{\bf C}^{-1}{\bf D}^H{\bf s}_r =P_r\lambda_{\max}\left( {\bf D}{\bf C}^{-1}{\bf D}^H \right),
\end{eqnarray}
where $\lambda_{\max}(\cdot)$ stands for the maximum eigenvalue of a matrix. Note that the SINR at the RR does not depend on actual ${\bf Q}_c$, but only on its trace since $P_r$ is a function of ${\rm tr}({\bf Q}_c)$. Substituting (\ref{eqn21}) into (\ref{eqn20}),  it can be expressed in terms of $P_r$ and ${\bf Q}_c$ as 
\begin{eqnarray}
\label{eqn22}
\max_{P_r, {\bf Q}_c \succeq 0} & & P_r \nonumber\\
{\rm s.t.} & &  \biggl\{ \log_2\left( {\rm det}\left({\bf H}_c{\bf Q}_c {\bf H}_c^H+\sigma_c^2{\bf I}_{M}\right)\right)-\log_2\left( {\rm det}\left({\bf H}_d{\bf Q}_c {\bf H}_d^H+\sigma_r^2{\bf I}_{N}\right)\right)+c_a\biggr\} \geq r_m, \\
& & {\rm tr}({\bf Q}_c)+P_r \leq P_T. \nonumber
\end{eqnarray}
It is clear that the optimum $P_r$ is such that $P_r=P_T- {\rm tr}({\bf Q}_c)$. Substituting $P_r$ in (\ref{eqn22}), the remaining optimization in terms of ${\bf Q}_c$ can be expressed as
\begin{eqnarray}
\label{eqn23}
\min_{{\bf Q}_c \succeq 0} & & {\rm tr}({\bf Q}_c) \nonumber\\
{\rm s.t.} & &  \biggl\{ \log_2\left( {\rm det}\left({\bf H}_c{\bf Q}_c {\bf H}_c^H+\sigma_c^2{\bf I}_{M}\right)\right)- \log_2\left( {\rm det}\left({\bf H}_d{\bf Q}_c {\bf H}_d^H+\sigma_r^2{\bf I}_{N}\right)\right)+c_a\biggr\} \geq r_m.
\end{eqnarray}
Note that if the optimum ${\bf Q}_c$ is such that ${\rm tr}({\bf Q}_c)$ turns out to be larger than $P_T$, then such solution is not feasible. Moreover, the secrecy rate does not depend on actual ${\bf s}_r$ and only on its squared norm (power). 
Unfortunately, the optimization problem (\ref{eqn23}) is not convex. Following the approach of maximizing the secrecy capacity under a transmit power constraint in a MIMO wiretap channel \cite{QiangLiJSAC}, we propose alternating optimization (AO) methods for solving  (\ref{eqn23}). We first propose an iterative approach wherein an SDP problem is solved in each iteration. We then propose an AO method where the SDP optimization will be replaced by a semi-analytical approach that includes bisection method \cite{QiangLiJSAC}, \cite{ChaliseVandendorpe}.  

\subsection{Iterative SDP}
Since ${\rm det}({\bf A}^{-1})=\frac{1}{{\rm det}({\bf A})}$ \cite{MatrixCookbook}, (\ref{eqn23}) can be expressed as
\begin{eqnarray}
\label{eqn24}
\min_{{\bf Q}_c \succeq 0} \hspace*{-0.6cm} & & {\rm tr}({\bf Q}_c) \nonumber\\
{\rm s.t.} \hspace*{-0.6cm} & &  \biggl\{ \log_2\left( {\rm det}\left({\bf H}_c{\bf Q}_c {\bf H}_c^H+\sigma_c^2{\bf I}_{M}\right)\right)+
  \log_2\left( {\rm det}\left({\bf H}_d{\bf Q}_c {\bf H}_d^H+\sigma_r^2{\bf I}_{N}\right)^{-1}\right) +c_a\biggr\} \geq r_m.
\end{eqnarray}
We now introduce an additional matrix variable ${\bf Y}\in {\mathcal C}^{N \times N}, {\bf Y}\succeq 0$, and utilize the following expression \cite{Rangarajan}:
\begin{eqnarray}
\label{eqn25}
\hspace*{-0.8cm} & &  \hspace*{-0.8cm} \log\left( {\rm det}\left({\bf H}_d{\bf Q}_c {\bf H}_d^H+\sigma_r^2{\bf I}_{N}\right)^{-1}\right) =\max_{{\bf Y}\succeq 0} \biggl\{\log({\rm det}({\bf Y}))- 
 {\rm tr}\left( {\bf Y} \left({\bf H}_d{\bf Q}_c {\bf H}_d^H+\sigma_r^2{\bf I}_{N}\right)\right) + N\biggr\}.
\end{eqnarray}
Substituting (\ref{eqn25}) into (\ref{eqn24}), we obtain the following optimization problem:
\begin{eqnarray}
\label{eqn26}
\min_{{\bf Q}_c \succeq 0} & & \hspace*{-0.4cm} {\rm tr}({\bf Q}_c) \nonumber\\
{\rm s.t.} & & \hspace*{-0.4cm} \max_{{\bf Y}\succeq 0} \biggl\{  \log\left( {\rm det}\left({\bf H}_c{\bf Q}_c {\bf H}_c^H+\sigma_c^2{\bf I}_{M}\right)\right)- \biggr.\nonumber\\
\biggl. & & 
{\rm tr}\left( {\bf Y} \left({\bf H}_d{\bf Q}_c {\bf H}_d^H+\sigma_r^2{\bf I}_{N}\right)\right)+\log({\rm det}({\bf Y}))+{\bar N}\biggr\} \geq {\bar r}_m,
\end{eqnarray}
where  ${\bar N}=c_a+N$ and ${\bar r}_m=r_m \log(2)$. The optimization  problem (\ref{eqn26}) can be solved in an iterative way as follows. For a given ${\bf Q}_c$, the optimum ${\bf Y}$ can be obtained by solving the first-order derivative of the constraint w.r.t. to ${\bf Y}$. This leads to
 \begin{eqnarray}
 \label{eqn27}
 {\bf Y}= \left({\bf H}_d{\bf Q}_c {\bf H}_d^H+\sigma_r^2{\bf I}_N\right)^{-1}. 
 \end{eqnarray}
 On the other hand, for a given ${\bf Y}\succeq 0$, the optimization over ${\bf Q}_c\succeq 0$ can be expressed as
  \begin{eqnarray}
 \label{eqn28}
\min_{{\bf Q}_c \succeq 0} & & \hspace*{-0.4cm} {\rm tr}({\bf Q}_c) \nonumber\\
{\rm s.t.} & & \hspace*{-0.4cm} \biggl\{  \log\left( {\rm det}\left({\bf H}_c{\bf Q}_c {\bf H}_c^H+\sigma_c^2{\bf I}_{M}\right)\right)- 
{\rm tr}\left( {\bf Y} \left({\bf H}_d{\bf Q}_c {\bf H}_d^H+\sigma_r^2{\bf I}_{N}\right)\right)+\biggr. \nonumber\\
\biggl. & &  \log({\rm det}({\bf Y}))+{\bar N}\biggr\} \geq {\bar r}_m,
 \end{eqnarray}
This optimization problem is convex and can be solved numerically using convex optimization toolbox such as discipline convex programming (CVX) \cite{GrantCVX}. The algorithm ({\bf Algorithm 1}) to solve the optimization problem (\ref{eqn23}) is summarized below.
\begin{itemize}
\item 1) Initialize maximum number of iterations, convergence accuracy, $\epsilon$, and initial ${\bf Q}_c\succeq 0$.
\item 2) Update ${\bf Y}$ using  (\ref{eqn27}).
\item 3) Update ${\bf Q}_c$ by solving  (\ref{eqn28}).
\item 4) Go to step (2) until required convergence accuracy is achieved or maximum number of iterations is reached. 
\end{itemize}
Note that the SDP problem (\ref{eqn28}) (which is convex optimization problem over ${\bf Q}_c$ )  is solved for a given ${\bf Y}$.  In each iteration of the algorithm, ${\bf Y}$ is updated to approximate the left-hand side of (\ref{eqn25}) via maximization of a concave function over ${\bf Y}$. This means that in  each iterative step of {\bf Algorithm 1}, we have an improved estimate of the left-hand side of the constraint used in the original problem (\ref{eqn23}). This leads to a decreasing objective function in each iteration.
 Moreover, since the objective function is continuously differentiable, and each variable (i.e, ${\bf Q}_c$ and ${\bf Y}$) belongs to a nonempty, closed, and concave subset,  the AO approach is guaranteed to converge \cite{QiangLiJSAC}. However, in general the execution of the SDP problems become very slow when they consist of large size matrices. This is evident from the worst-case complexity of the standard form SDPs, which is  given by ${\mathcal O}\left(N_t^{4.5}\log\left( \frac{1}{\epsilon}\right)\right)$ \cite{LuoMaTSPM10} for a given  solution accuracy of $\epsilon$. As such, the worst-case complexity of   {\bf Algorithm 1} is  larger than ${\mathcal O}\left(N_{it} N_t^{4.5}\log\left( \frac{1}{\epsilon}\right)\right)$, where $N_{it}$ is the number of iterations required to achieve $|{\rm tr}({\bf Q}_c^{(n)})- {\rm tr}({\bf Q}_c^{(n-1)})|\leq \epsilon$, where ${\bf Q}_c^{(n)}$ denotes the covariance matrix at the $n$th iteration. 

\subsection{Semi-analytical Approach}
Motivated from the solution approach of the secrecy rate maximization problem in MIMO system \cite{QiangLiJSAC}, we propose a semi-analytical approach for solving (\ref{eqn28}).  The Lagrangian multiplier function for  (\ref{eqn28}) can be expressed as
  \begin{eqnarray}
 \label{eqn29}
& & {\mathcal L}({\bf Q}_c, \lambda)={\rm tr}({\bf Q}_c)+\lambda\biggl\{   {\bar r}_m- \log\left( {\rm det}({\bf Y})\right)-{\bar N}+ 
{\rm tr}\left( {\bf Y} \left({\bf H}_d{\bf Q}_c {\bf H}_d^H+\sigma_r^2{\bf I}_{N}\right)\right)-\biggr. \nonumber\\
\biggl. & & 
\hspace*{2cm}  \log\left( {\rm det}\left({\bf H}_c{\bf Q}_c {\bf H}_c^H+\sigma_c^2{\bf I}_{M}\right)\right)\biggr\}, 
 \end{eqnarray}
 where $\lambda\geq 0$ is a Lagrangian multiplier. Now the main result is presented in the following proposition. 
\begin{proposition}
For a given feasible $\lambda$, the optimum solution of ${\bf Q}_c$, as a function of $\lambda$,  is given by
\begin{eqnarray}
\label{eqn30}
{\bf Q}_c(\lambda)={\bf P}^{-\frac{H}{2}} {\bf V}{\boldsymbol \Lambda} {\bf V}^H {\bf P}^{-\frac{1}{2}},
\end{eqnarray}
where ${\bf P}={\bf I}_{N_t}+\lambda {\bf H}_d^H{\bf Y}{\bf H}_d$, ${\bf V}$ is a matrix of left singular vectors of ${\bf H}_c {\bf P}^{-\frac{H}{2}}$, i.e., ${\bf H}_c {\bf P}^{-\frac{H}{2}}={\bf U}{\boldsymbol \Sigma}{\bf V}^H$, 
and the non-zero diagonal elements, $\{\mu_i\}_{i=1}^{r}$ of ${\boldsymbol \Lambda}$, are given by
\begin{eqnarray}
\label{eqn40}
\mu_i=\left[ \lambda-\frac{\sigma_c^2}{d_i^2}\right]^{+},
\end{eqnarray}
where $\left\{d_i \right\}_{i=1}^{r}$ are the non-zero diagonal elements of ${\boldsymbol \Sigma}$, and $r=\min(M, N_t)$. 
\end{proposition}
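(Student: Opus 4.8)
The plan is to minimize the Lagrangian ${\mathcal L}({\bf Q}_c,\lambda)$ of (\ref{eqn29}) over ${\bf Q}_c\succeq 0$ for fixed $\lambda\ge 0$ in closed form, and then observe that since (\ref{eqn28}) is a convex program (linear objective, concave constraint function) satisfying Slater's condition, the value of $\lambda$ that makes the resulting ${\bf Q}_c(\lambda)$ meet the rate constraint with equality — the one sought by the bisection step of the algorithm — yields, by KKT/strong duality, the optimal ${\bf Q}_c$. First I would discard the $\lambda\{\bar r_m-\log({\rm det}({\bf Y}))-{\bar N}+\sigma_r^2{\rm tr}({\bf Y})\}$ terms of (\ref{eqn29}) that do not involve ${\bf Q}_c$, and collect the rest using ${\rm tr}({\bf Y}{\bf H}_d{\bf Q}_c{\bf H}_d^H)={\rm tr}({\bf H}_d^H{\bf Y}{\bf H}_d{\bf Q}_c)$, so that the inner problem becomes minimization over ${\bf Q}_c\succeq 0$ of
$$g({\bf Q}_c)={\rm tr}({\bf P}{\bf Q}_c)-\lambda\log\left({\rm det}\left({\bf H}_c{\bf Q}_c{\bf H}_c^H+\sigma_c^2{\bf I}_M\right)\right),\qquad {\bf P}={\bf I}_{N_t}+\lambda{\bf H}_d^H{\bf Y}{\bf H}_d\succ 0,$$
which is convex and coercive in ${\bf Q}_c$ for $\lambda\ge 0$, hence attains a minimizer.

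Next, write ${\bf P}={\bf P}^{\frac{1}{2}}{\bf P}^{\frac{H}{2}}$ with ${\bf P}^{\frac{H}{2}}=({\bf P}^{\frac{1}{2}})^H$, and apply the whitening substitution ${\tilde {\bf Q}}={\bf P}^{\frac{H}{2}}{\bf Q}_c{\bf P}^{\frac{1}{2}}\succeq 0$. Then ${\rm tr}({\bf P}{\bf Q}_c)={\rm tr}({\tilde {\bf Q}})$ and ${\bf H}_c{\bf Q}_c{\bf H}_c^H={\tilde {\bf H}}_c{\tilde {\bf Q}}{\tilde {\bf H}}_c^H$ with ${\tilde {\bf H}}_c\triangleq {\bf H}_c{\bf P}^{-\frac{H}{2}}$, and, using $\log({\rm det}({\tilde {\bf H}}_c{\tilde {\bf Q}}{\tilde {\bf H}}_c^H+\sigma_c^2{\bf I}_M))=M\log\sigma_c^2+\log({\rm det}({\bf I}_M+\sigma_c^{-2}{\tilde {\bf H}}_c{\tilde {\bf Q}}{\tilde {\bf H}}_c^H))$, the problem reduces, up to an additive constant, to the canonical single-user MIMO form $\min_{{\tilde {\bf Q}}\succeq 0}\{{\rm tr}({\tilde {\bf Q}})-\lambda\log({\rm det}({\bf I}_M+\sigma_c^{-2}{\tilde {\bf H}}_c{\tilde {\bf Q}}{\tilde {\bf H}}_c^H))\}$. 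Substituting the SVD ${\tilde {\bf H}}_c={\bf U}{\boldsymbol \Sigma}{\bf V}^H$ and invoking the classical fact that, for fixed eigenvalues of ${\tilde {\bf Q}}$, the $\log{\rm det}$ term is maximized when ${\tilde {\bf Q}}$ shares its eigenvectors with ${\tilde {\bf H}}_c^H{\tilde {\bf H}}_c$ (while ${\rm tr}({\tilde {\bf Q}})$ depends on eigenvalues only), the minimizer must have the form ${\tilde {\bf Q}}={\bf V}{\boldsymbol \Lambda}{\bf V}^H$ with ${\boldsymbol \Lambda}={\rm diag}(\mu_1,\dots,\mu_{N_t})\succeq 0$.

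With this structure the objective decouples into the scalar problems $\min_{\mu_i\ge 0}\{\mu_i-\lambda\log(1+\sigma_c^{-2}d_i^2\mu_i)\}$ for $i=1,\dots,r$, while the remaining $N_t-r$ directions (those with $d_i=0$) force $\mu_i=0$, since increasing them only raises the trace. Setting the derivative to zero, $1-\lambda\sigma_c^{-2}d_i^2/(1+\sigma_c^{-2}d_i^2\mu_i)=0$, and projecting onto $\mu_i\ge 0$ gives the water-filling solution $\mu_i=[\lambda-\sigma_c^2/d_i^2]^+$ of (\ref{eqn40}). Reverting the change of variable, ${\bf Q}_c={\bf P}^{-\frac{H}{2}}{\tilde {\bf Q}}{\bf P}^{-\frac{1}{2}}={\bf P}^{-\frac{H}{2}}{\bf V}{\boldsymbol \Lambda}{\bf V}^H{\bf P}^{-\frac{1}{2}}$, which is (\ref{eqn30}); a ``feasible'' $\lambda$ is then one for which this ${\bf Q}_c(\lambda)$ satisfies the rate constraint with equality, and convexity of (\ref{eqn28}) together with strong duality certifies its optimality.

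I expect the eigenvector-alignment step to be the main obstacle — rigorously proving that an optimal ${\tilde {\bf Q}}$ must be simultaneously diagonalizable with ${\tilde {\bf H}}_c^H{\tilde {\bf H}}_c$. This can be handled either by a classical matrix inequality (the Hadamard inequality applied after rotating into the eigenbasis, or a Poincar\'e-separation/trace argument), or, more directly, by writing the KKT stationarity condition ${\bf P}=\lambda{\bf H}_c^H({\bf H}_c{\bf Q}_c{\bf H}_c^H+\sigma_c^2{\bf I}_M)^{-1}{\bf H}_c+{\boldsymbol \Psi}$, ${\boldsymbol \Psi}\succeq 0$, ${\boldsymbol \Psi}{\bf Q}_c={\bf 0}$, and reading off the structure of ${\bf Q}_c$ from it. A minor bookkeeping point is the non-square case $M\ne N_t$: ${\tilde {\bf Q}}$ lives in ${\mathcal C}^{N_t\times N_t}$, but only the $r=\min(M,N_t)$ nonzero singular values of ${\tilde {\bf H}}_c$ carry power, which is exactly why only $\{\mu_i\}_{i=1}^{r}$ appear in the statement.
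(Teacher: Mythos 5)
Your proposal is correct and follows essentially the same route as the paper's Appendix: drop the constant terms of the Lagrangian, whiten with ${\bf P}^{\frac{H}{2}}$ so the problem becomes ${\rm tr}({\tilde {\bf Q}}_c)-\lambda\log({\rm det}({\bf H}_c{\bf P}^{-\frac{H}{2}}{\tilde {\bf Q}}_c{\bf P}^{-\frac{1}{2}}{\bf H}_c^H+\sigma_c^2{\bf I}_M))$, invoke the Hadamard inequality to align ${\tilde {\bf Q}}_c$ with the right singular vectors of ${\bf H}_c{\bf P}^{-\frac{H}{2}}$, and solve the resulting scalar water-filling problems. The only additions beyond the paper are your remarks on strong duality and the alternative KKT-stationarity route for the alignment step, which are sound but not needed to match the published argument.
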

\begin{proof}
Please refer to Appendix. 
\end{proof}
The remaining step is to calculate the value of $\lambda$. The optimum $\lambda$ is such that it satisfies the following complementary slackness condition \cite{Boyd}
  \begin{eqnarray}
 \label{eqn41}
& & \lambda\biggl\{   {\bar r}_m- \log\left( {\rm det}({\bf Y})\right)-{\bar N}+
{\rm tr}\left( {\bf Y} \left({\bf H}_d{\bf Q}_c(\lambda) {\bf H}_d^H+\sigma_r^2{\bf I}_{N}\right)\right)-\biggr. \nonumber\\
\biggl. & &  \log\left( {\rm det}\left({\bf H}_c{\bf Q}_c(\lambda) {\bf H}_c^H+\sigma_c^2{\bf I}_{M}\right)\right)\biggr\}=0.
 \end{eqnarray}
 The optimum $\lambda$ cannot be equal to zero. This is obvious since $\frac{\partial {\mathcal L}({\bf Q}_c, \lambda)}{\partial {\bf Q}_c}={\bf I}_{N_t}$, i.e., the partial derivative of the Lagrangian function with respect to ${\bf Q}_c$
  cannot be a zero matrix for $\lambda=0$. As such, the optimum $\lambda$ is such that 
  \begin{eqnarray}
 \label{eqn42}
& & g(\lambda)\triangleq \biggl\{   {\bar r}_m- \log\left( {\rm det}({\bf Y})\right)-{\bar N}+ 
{\rm tr}\left( {\bf Y} \left({\bf H}_d{\bf Q}_c(\lambda) {\bf H}_d^H+\sigma_r^2{\bf I}_{N}\right)\right)-\biggr. \nonumber\\
\biggl. & &  \log\left( {\rm det}\left({\bf H}_c{\bf Q}_c(\lambda) {\bf H}_c^H+\sigma_c^2{\bf I}_{M}\right)\right)\biggr\}=0.
 \end{eqnarray}
The value of $\lambda$ can be found by solving  (\ref{eqn42}). As there exists no closed-form solution for this equation, $\lambda$ can be obtained from a general one-dimensional search over $\lambda$ or more specifically, the bisection method. Thus, the proposed semi-analytical approach for solving  the optimization problem (\ref{eqn23}) is summarized below ({\bf Algorithm 2}).
\begin{itemize}
\item 1) Initialize maximum number of iterations, $N_{it}$, convergence accuracy, $\epsilon$, and ${\bf Q}_c\succeq 0$
\item 2) Update ${\bf Y}$ using  (\ref{eqn27})
\item 3) Execute the following steps of bisection method to find $\lambda$
\begin{itemize}
\item a)  Initialize $\lambda_{\min}$ and $\lambda_{\max}$ such that $g(\lambda_{\min}) g(\lambda_{\max})<0$
\item b) Set $\lambda_n=\frac{\lambda_{\min}+\lambda_{\max}}{2}$
\item c) Calculate ${\bf Q}_c(\lambda_n)$ by using  (\ref{eqn30})
\item d) If $g(\lambda_n)g(\lambda_{\max})< 0$, set $\lambda_{\min}=\lambda_n$,  otherwise set $\lambda_{\max}=\lambda_n$
\item e) Go to step (b) until convergence of bisection algorithm 
\end{itemize}
\item 4) Go to step (2) until required convergence accuracy is achieved or maximum number of iterations is reached
\end{itemize}
Note that the steps of bisection method to find $\lambda$  can be alternatively implemented by using {\it one-dimensional grid search} over $\lambda$. Using similar arguments as in the case of {\bf Algorithm 1}, the convergence of this algorithm can be guaranteed. For a given ${\bf Y}$, the bisection algorithm is guaranteed to converge, which requires $n_b=\log_2\left( \frac{  \lambda_{\max}^{(0)}- \lambda_{\min}^{(0)}}{\epsilon} \right)$ iterations \cite{ChaliseVandendorpe}, where  $\lambda_{\max}^{(0)}- \lambda_{\min}^{(0)}$ is the initial interval of $\lambda$ within which the root of $g(\lambda)$ lies. Therefore, the computational complexity of  {\bf Algorithm 2} is given by ${\mathcal O}\left(N_{it}n_b\right)$, where $N_{it}$ is the number of outer iterations (i.e., iterations over ${\bf Y}$).  Clearly, in contrast to {\bf Algorithm 1},  the complexity of  {\bf Algorithm 2} does not increase polynomially (with an exponent of 4.5) in $N_t$. This makes  possible the execution of {\bf Algorithm 2}  much faster than  {\bf Algorithm 1}. Moreover,  it is worthwhile to mention that this type of  AO methods guarantee suboptimum solutions and, in fact,  converge to Karush-Kuhn-Tucker (KKT) point. This can be proven by following similar derivations as in the case of MIMO wiretap channels (see Proposition 1, \cite{QiangLiJSAC} and the references therein). On the other hand, the global optimization technique, such as branch and bound \cite{Honggang},  can be applied to get the global optimum solution. However, such global optimization method requires exponential complexity and becomes computationally prohibitive in radar systems in which $L$ can easily take large values. On the other hand, a rigorous investigation, which we believe is a significant new task and beyond the scope of this paper, is required for solving the underlying optimization problems with the global optimization techniques \cite{Honggang}.

\section{Radar and Communication Transmissions with Same Resources}
 \label{sec3}
In this section, we consider that the transmitter sends radar and information signals using the same set of resources (overlapping case). In this case, both RR and CR observes a mixture of radar and information signals. Due to this reason, we will find in the sequel that the optimization problem required for obtaining the optimum tradeoff between radar and communication systems become further challenging to solve. In this overlapping case,  the transmitted signal vector for the $l$th time instant can be expressed as
\begin{eqnarray}
\label{eqn43}
{\bf s}[l]={\bf s}_c[l]+{\bf s}_r[l], l=1,\cdots, L,
\end{eqnarray}
${\bf s}_c[l] \in {\mathcal C}^{N_t \times 1} $  is the communication signal corresponding to the $l$th time instant. Replacing ${\bf s}_r[l]$ by ${\bf s}[l]$ in  (\ref{eqn1}) and following similar derivations as in  (\ref{eqn2})-(\ref{eqn6}), the received signal vector at the surveillance antennas, ${\bar {\bf x}}_s\in {\mathcal C}^ {LN\times 1}$  (i.e, equivalent version of (\ref{eqn7})) can be expressed as
\begin{eqnarray}
\label{eqn44}
{\bar {\bf x}}_s={\bf A}_s(\theta_0){\bf A}_d(\theta_t)({\bf s}_r+{\bf s}_c)+{\bf A}_s(\theta_0){\bf A}^H(\theta_d){\bf v}_{dr}+ {\bf v}_{sr},
\end{eqnarray}
where ${\bf s}_c={\rm vec}\left[ {\bf s}_c[1], \cdots, {\bf s}_c[L]\right] \in {\mathcal C}^{LN_t \times 1}$. After linear processing of the received surveillance signal with spatio-temporal vector ${\bar {\bf w}}\in{\mathcal C}^{LN\times 1}$ , the SINR at the RR is given by
\begin{eqnarray}
\label{eqn45}
\gamma_R=\frac{|{\bar {\bf w}}^H{\bf A}_s(\theta_0){\bf A}_d(\theta_t) {\bf s}_r|^2}{{\bar {\bf w}}^H{\bf C}({\bf Q}_c){\bar {\bf w}}},
\end{eqnarray}
where
\begin{eqnarray}
\label{eqn46}
{\bf C}({\bf Q}_c)={\bf A}_s(\theta_0){\bf A}_d(\theta_t)({\bf I}_L \otimes {\bf Q}_c){\bf A}^H_d(\theta_t){\bf A}^H_s(\theta_0)+\sigma_r^2{\bf I}_{LN}+\sigma_r^2{\bf A}_s(\theta_0){\bf A}_d(\theta_t) {\bf A}^H_s(\theta_0).
\end{eqnarray}
Considering that the channels remain same during a period of $L$ time instants, the signals received by the DC-antennas of the RR and CR are, respectively, given by
\begin{eqnarray}
\label{eqn47}
{\bar {\bf x}}_r&=&{\bar {\bf H}}_d {\bf s}_c+{\bar {\bf H}}_d{\bf s}_r+{\bf v}_{dr}, \nonumber\\
{\bar {\bf x}}_c&=& {\bar {\bf H}_c} {\bf s}_c+{\bar {\bf H}}_c{\bf s}_r+{\bar {\bf v}}_{c},
\end{eqnarray}
where ${\bar {\bf H}}_d={\bf I}_L\otimes {\bf H}_d$,  ${\bar {\bf H}}_c={\bf I}_L\otimes {\bf H}_c$, and ${\bar {\bf v}}_{c}\in {\mathcal C}^{L M \times 1}$ is zero-mean additive Gaussian noise with the variance $\sigma_c^2$. 
Define ${\bf R}_c \triangleq {\bar {\bf H}}_c{\bf s}_r{\bf s}_r^H{\bar {\bf H}}_c^H+\sigma_c^2{\bf I}_{LM}$, ${\bf R}_d \triangleq {\bar {\bf H}}_d{\bf s}_r{\bf s}_r^H{\bar {\bf H}}_d^H+\sigma_d^2{\bf I}_{LN}$. The capacities for the transmitter-CR and transmitter-RR, are, respectively given by
\begin{eqnarray}
\label{eqn48}
{\tilde C}_c&=&\log_2({\rm det}\left({\bf I}_{LM}+{\bar {\bf H}}_c({\bf I}_L \otimes {\bf Q}_c) {\bar {\bf H}}_c^H{\bf R}_c^{-1}\right)), \nonumber\\
{\tilde C}_d&=&\log_2({\rm det}\left({\bf I}_{LN}+{\bar {\bf H}}_d({\bf I}_L \otimes {\bf Q}_c) {\bar {\bf H}}_d^H{\bf R}_d^{-1}\right)).
\end{eqnarray}
The secrecy capacity can be expressed as ${\tilde C}_s=\max(0, {\tilde C}_c-{\tilde C}_d)$, where  ${\tilde C}_c-{\tilde C}_d \triangleq {\bar C}_s$ can be expressed as
\begin{eqnarray}
\label{eqn49}
{\bar C}_s&=&\log_2({\rm det}\left({\bf R}_c+{\bar {\bf H}}_c({\bf I}_L \otimes {\bf Q}_c) {\bar{\bf H}}_c^H\right))+\log_2({\rm det}({\bf R}_c^{-1}))+\nonumber\\
& & \log_2({\rm det}({\bf R}_d)) -\log_2({\rm det}\left({\bf R}_d+{\bar {\bf H}}_d({\bf I}_L \otimes {\bf Q}_c) {\bar {\bf H}}_d^H\right)).
\end{eqnarray}
As in the case where radar and communications signals occupy different resources, our objective is to optimize SINR at the RR, while ensuring that the secrecy capacity is above a certain threshold value, ${\tilde r}_m$.  
\begin{eqnarray}
\label{eqn50}
\max_{{\bar {\bf w}}, {\bf s}_r, {\bf Q}_c\succeq 0} & & {\bar \gamma}_R\triangleq \frac{{\bar {\bf w}}^H{\bf D}^H{\bf s}_r{\bf s}_r^H{\bf D}{\bar {\bf w}}}{{\bar{\bf w}}^H{\bf C}({\bf Q}_c){\bar{\bf w}}} \nonumber\\
{\rm s.t.} & & {\tilde C}_s \geq {\tilde r}_m,\\
&& {\bf s}_r^H{\bf s}_r+{\rm tr}({\bf Q}_c)\leq P_T. \nonumber
\end{eqnarray}
For a given ${\bf Q}_c\succeq {\bf 0}$ and ${\bf s}_r$, the objective function in (\ref{eqn50}) is a function of only ${\bar {\bf w}}$. The optimum ${\bar {\bf w}}$ will be similar as in  (\ref{eqn18}) with ${\bf C}$ replaced by ${\bf C}({\bf Q}_c)$. 
After substituting such optimum ${\bar {\bf w}}$, the objective function in (\ref{eqn50}) turns to
\begin{eqnarray}
\label{eqn51}
{\bar \gamma}_R={\bf s}_r^H{\bf D} {\bf C}({\bf Q}_c)^{-1}{\bf D}^H{\bf s}_r. 
\end{eqnarray}
As such, the optimization problem  (\ref{eqn50}) can be re-expressed as
\begin{eqnarray}
\label{eqn52}
\max_{{\bf s}_r, {\bf Q}_c\succeq 0} & &  {\bf s}_r^H{\bf D} {\bf C}({\bf Q}_c)^{-1}{\bf D}^H{\bf s}_r \nonumber\\
{\rm s.t.} & & {\tilde C}_s \geq {\tilde r}_m, \\
&& {\bf s}_r^H{\bf s}_r+{\rm tr}({\bf Q}_c)\leq P_T. \nonumber
\end{eqnarray}
Note that $\max_{\{ {\bf s}_r, {\bf Q}_c\succeq 0 \}}  {\bar \gamma}_R=\max_{\{{\bf s}_r, {\bf Q}_c\succeq 0\}} \left[1+ {\bf s}_r^H{\bf D} {\bf C}({\bf Q}_c)^{-1}{\bf D}^H{\bf s}_r\right]$. Since
\begin{eqnarray}
\label{eqn53}
1+ {\bf s}_r^H{\bf D} {\bf C}({\bf Q}_c)^{-1}{\bf D}^H{\bf s}_r={\rm det}({\bf I}_{LN}+{\bf D}^H{\bf s}_r{\bf s}_r^H{\bf D}{\bf C}({\bf Q}_c)^{-1}), 
\end{eqnarray}
and it can be easily shown that
\begin{eqnarray}
\label{eqn54}
 \log\left( {\rm det}({\bf I}_{LN}+{\bf D}^H{\bf s}_r{\bf s}_r^H{\bf D}{\bf C}({\bf Q}_c)^{-1})\right)=\log\left({\rm det}({\bf C}({\bf Q}_c)^{-1})\right) +\log\left({\rm det}({\bf C}({\bf Q}_c)+{\bf D}^H{\bf s}_r{\bf s}_r^H{\bf D})\right), 
\end{eqnarray}
the optimization problem (\ref{eqn52}) is expressed as
\begin{eqnarray}
\label{eqn55}
\max_{{\bf s}_r, {\bf Q}_c \succeq 0} & & \log\left({\rm det}({\bf C}({\bf Q}_c)+{\bf D}^H{\bf s}_r{\bf s}_r^H{\bf D})\right)+\log\left({\rm det}({\bf C}({\bf Q}_c)^{-1})\right) \nonumber\\
{\rm s.t.} & &\biggl\{ \log({\rm det}\left({\bf R}_c+{\bar {\bf H}}_c({\bf I}_L \otimes {\bf Q}_c){\bar {\bf H}}_c^H\right))+\log({\rm det}({\bf R}_c^{-1}))+ \biggr. \nonumber\\
\biggl. 
& & \log({\rm det}({\bf R}_d)) +\log({\rm det}\left({\bf R}_d+{\bar {\bf H}}_d({\bf I}_L \otimes {\bf Q}_c){\bar {\bf H}}_d^H\right)^{-1}) \biggr\}\geq \log(2) {\tilde r}_m. 
\end{eqnarray}
Introducing the auxiliary matrices ${\bf X}\succeq 0, {\bar {\bf Y}} \succeq 0$, and ${\bf Z} \succeq 0$ and replacing the terms of the form $\log ({\rm det}({\bf A}^{-1}))$ in a same way as in (\ref{eqn25}), 
we can reformulate (\ref{eqn55}) as the following optimization problem. 
\begin{eqnarray}
\label{eqn56}
\max_{{\bf s}_r, {\bf Q}_c\succeq 0, {\bf X}\succeq 0, {\bar {\bf Y}} \succeq 0, {\bf Z} \succeq 0} & & \log\left({\rm det}({\bf C}({\bf Q}_c)+{\bf D}^H{\bf s}_r{\bf s}_r^H{\bf D})\right)+\log({\rm det}({\bf X}))-{\rm tr}\left( {\bf X}{\bf C}({\bf Q}_c)\right)+LN \nonumber\\
{\rm s.t.} & &\biggl\{ \log({\rm det}\left({\bar {\bf H}}_c({\bf s}_r{\bf s}_r^H+({\bf I}_L \otimes {\bf Q}_c)){\bar {\bf H}}_c^H+\sigma_c^2{\bf I}_{LM}\right))+\log({\rm det}({\bar {\bf Y}})) \biggr. \nonumber\\
\biggl. 
& & -{\rm tr}\left( {\bar {\bf Y}}\left({\bar {\bf H}}_d({\bf s}_r{\bf s}_r^H+({\bf I}_L \otimes {\bf Q}_c)){\bar {\bf H}}_d^H+\sigma_r^2{\bf I}_{LN}\right)\right)+LN \biggr. \nonumber\\
\biggl. 
& & +\log({\rm det}\left({\bar {\bf H}}_d{\bf s}_r{\bf s}_r^H{\bar {\bf H}}_d^H+\sigma_r^2{\bf I}_{LN}\right))+\log({\rm det}({\bf Z})) \biggr. \nonumber\\
& & -{\rm tr}\left( {\bf Z}\left({\bar {\bf H}}_c{\bf s}_r{\bf s}_r^H{\bar {\bf H}}_c^H+\sigma_c^2{\bf I}_{LM}\right)\right)+LM \biggr\}\geq {\hat r}_m,
\end{eqnarray}
where ${\hat r}_m={\tilde r}_m \log(2)$.  We now introduce a new matrix variable ${\bar {\bf S}}_r={\bf s}_r{\bf s}_r^H \succeq 0$ and relax rank-one constraint of ${\bar {\bf S}}_r$. With the relaxation, we can solve the optimization problem (\ref{eqn56}) using alternating optimization approach. In particular, for a given ${\bar {\bf S}}_r$ and ${\bf Q}_c$, the solutions of ${\bf X}$, ${\bar {\bf Y}}$, and ${\bf Z}$ are expressed as
\begin{eqnarray}
\label{eqn57}
& & {\bf X}=[{\bf C}({\bf Q}_c)]^{-1},~{\bar {\bf Y}}=\left({\bar {\bf H}}_d( {\bar {\bf S}}_r+({\bf I}_L \otimes {\bf Q}_c)){\bar {\bf H}}_d^H+\sigma_r^2{\bf I}_{LN}\right)^{-1}, \nonumber\\
& & {\bf Z}=\left({\bar {\bf H}}_c{\bar {\bf S}}_r{\bar {\bf H}}_c^H+\sigma_c^2{\bf I}_{LM}\right)^{-1}.
\end{eqnarray}
On the other hand, for a given $\{ {\bf X}, {\bar {\bf Y}}, {\bf Z}\}$, the optimization over ${\bf S}_r$ and ${\bf Q}_c$ is the following SDR problem:
\begin{eqnarray}
\label{eqn58}
\max_{{\bar {\bf S}}_r\succeq 0, {\bf Q}_c\succeq 0} & & \log\left({\rm det}({\bf C}({\bf Q}_c)+{\bf D}^H{\bar {\bf S}}_r{\bf D})\right)-{\rm tr}\left( {\bf X}{\bf C}({\bf Q}_c)\right) \nonumber\\
{\rm s.t.} & &\biggl\{ \log({\rm det}\left({\bar {\bf H}}_c( {\bar {\bf S}}_r+({\bf I}_L \otimes {\bf Q}_c)){\bar {\bf H}}_c^H+\sigma_c^2{\bf I}_{LM}\right))+\log({\rm det}({\bar {\bf Y}})) \biggr. \nonumber\\
\biggl. 
& & -{\rm tr}\left( {\bar {\bf Y}}\left({\bar {\bf H}}_d( {\bar {\bf S}}_r+({\bf I}_L \otimes {\bf Q}_c)){\bar {\bf H}}_d^H+\sigma_r^2{\bf I}_{LN}\right)\right)+LN \biggr. \nonumber\\
\biggl. 
& & +\log({\rm det}\left({\bar {\bf H}}_d {\bar {\bf S}}_r{\bar {\bf H}}_d^H+\sigma_r^2{\bf I}_{LN}\right))+\log({\rm det}({\bf Z})) \biggr. \nonumber\\
& & -{\rm tr}\left( {\bf Z}\left({\bar {\bf H}}_c {\bar {\bf S}}_r{\bar {\bf H}}_c^H+\sigma_c^2{\bf I}_{LM}\right)\right)+LM \biggr\}\geq {\hat r}_m. 
\end{eqnarray}
The relaxed optimization problem (\ref{eqn58}) can be solved using iterative approach as in the case of ${\bf Algorithm 1}$.  In each iteration, the convex optimization problem w.r.t. ${\bar {\bf S}}_r$  and ${\bf Q}_c$  is solved by keeping the auxiliary variables, $\{ {\bf X}, {\bar {\bf Y}}, {\bf Z}\}$,  fixed, and then the  convex optimization problem w.r.t. auxiliary variables is solved by keeping  ${\bar {\bf S}}_r$  and ${\bf Q}_c$  fixed. As such, this method is an AO method and its convergence can be proven with a similar way as in  ${\bf Algorithm 1}$. After convergence, if the optimum  ${\bar {\bf S}}_r$  is rank-one, then it will also be the optimum solution of the original problem  (\ref{eqn56}). Otherwise, randomization techniques can be applied to approximate rank-one solutions from ${\bar {\bf S}}_r$ \cite{ChaliseVandendorpe}.\\
\hspace*{0.3cm} In contrast to the optimization problem in  orthogonal (non-overlapping) case, the SINR at the RR as well as secrecy information rate depend on the actual values of  ${\bar {\bf S}}_r$ and ${\bf Q}_c$. This suggests that the overlapping case provides additional degrees of freedom to maximize the SINR and satisfy the target secrecy rate. Consequently, although overlapping case causes interference (to RR from communication signal) and (to CR from radar waveform), better performance can be obtained through joint optimization of ${\bar {\bf S}}_r$ and ${\bf Q}_c$. Our numerical simulations of next section  also justify this argument. However, this improvement in performance is achieved with a increased complexity. The reason is that, in contrast to the SDP in non-overlapping case, problem (\ref{eqn58}) has an additional matrix variable, ${\bar{\bf S}}_r$, of size $LN_t \times LN_t$. This means that the complexity of the corresponding algorithm increases polynomially in $LN_t$ (in contrast to only $N_t$ in non-overlapping case). It is worthwhile to comment that, in the non-overlapping case,  the secrecy rate does not depend on the radar waveform, ${\bf s}_r$,  (see (\ref{eqn13N})) , whereas the radar SINR does not depend on the transmit covariance matrix, ${\bf Q}_c$,   of information signals (see (\ref{eqn19})). However, in the overlapping case, the secrecy rate as well as the radar SINR depend on both radar waveform and transmit covariance matrix  (see (\ref{eqn45}) and  (\ref{eqn49}) ).

\section{Numerical Results}
\label{sec4}
In this section, we first simulate the performance of the proposed SDP-based ({\bf Algorithm 1}) and semi-analytical  ({\bf Algorithm 2}) approaches for the case in which the radar waveforms and information signals occupy orthogonal (non-overlapping) resources. We then demonstrate the performance of the proposed SDR-based algorithm for the case where the radar waveforms and information signals occupy the same sets of resources. Throughout all simulations, we consider that the transmitter, RR, and CR employ uniform linear array (ULA) with half-wavelength inter-element spacing. We illustrate the performance of the proposed methods by choosing $\theta_t=40^{\circ}$,  $\theta_r=42^{\circ}$, $M=N$, and $P_T=30$ W. The hypothesized target position's location is such that  $\theta_{t,0}=30^{\circ}$,  $\theta_{r,0}=32^{\circ}$. The average signal-to-noise ratios (SNR) associated with the direct path and surveillance channels are set to 20 dB and 10 dB, respectively, whereas that associated with the transmitter-CR channel is set to $0$ dB. The elements of the transmitter-CR channel are assumed to be zero-mean complex i.i.d. Gaussian. All results correspond to averaging over $100$ simulation runs, and we take $N_{it}=100$ and $\epsilon=0.01$.  

\begin{figure}[htb!]
\includegraphics[width=.45\linewidth]{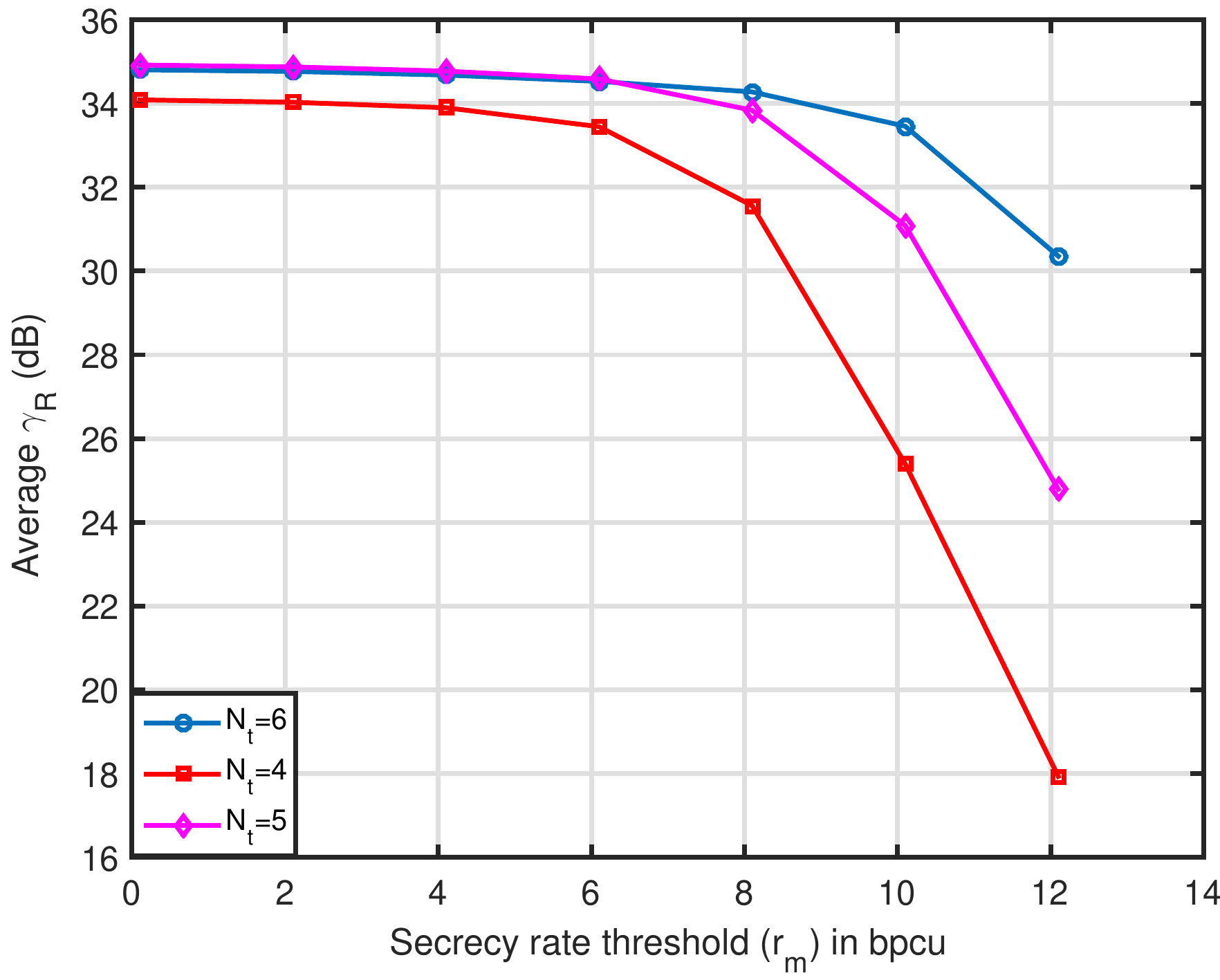}\hfill
\includegraphics[width=.45\linewidth]{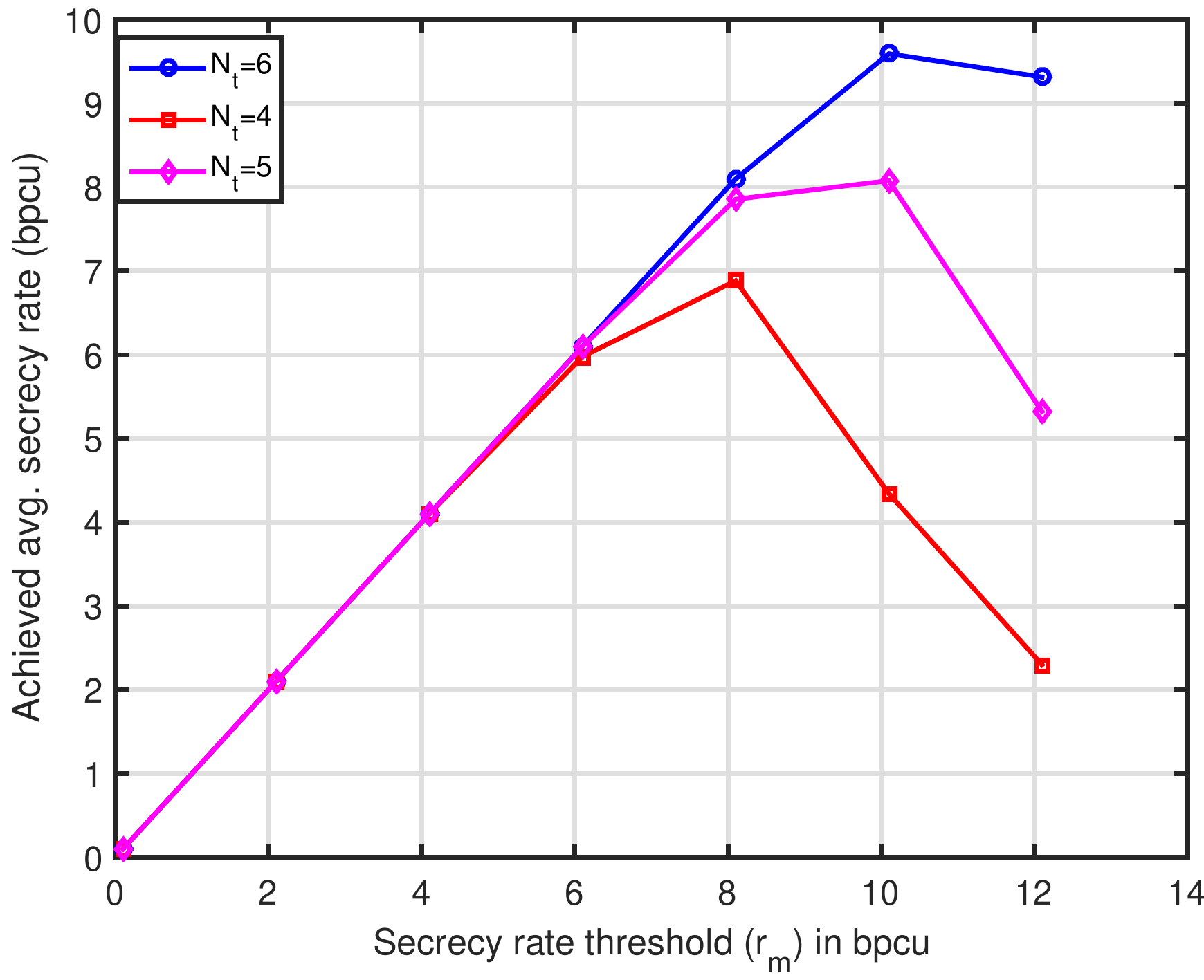}
\caption{Performance of {\bf Algorithm 1} for different $N_t$:  Left - Secrecy rate threshold versus SINR at the RR, Right -  Secrecy rate threshold vs achieved average secrecy rate}
\label{simRes1}
\end{figure}
The performance of the SDP-based method ({\bf Algorithm 1}) is shown in Fig. \ref{simRes1} for different values of $N_t$, where we set $M=N=4$ and $L=10$.  It can be observed from  Fig. \ref{simRes1}-Left that the average SINR at the RR  starts to decrease when the secrecy rate threshold, $r_m$,  increases. This decrease is, however,  significant after $r_m$ reaches a certain value and for smaller values of $N_t$.  Fig. \ref{simRes1}-Right plots the achieved average secrecy rate as a function of $r_m$. It can be observed from this figure that the achieved average secrecy rate is same as $r_m$ for its smaller values. At larger values of $r_m$, the achieved average secrecy rate drops significantly. This is due to the fact that the feasibility of the SDP decreases when $r_m$  increases. More specifically, the number of channels, for which the secrecy rate threshold cannot be met, increases as $r_m$ increases. 

\begin{figure}[htb!]
\includegraphics[width=.45\linewidth]{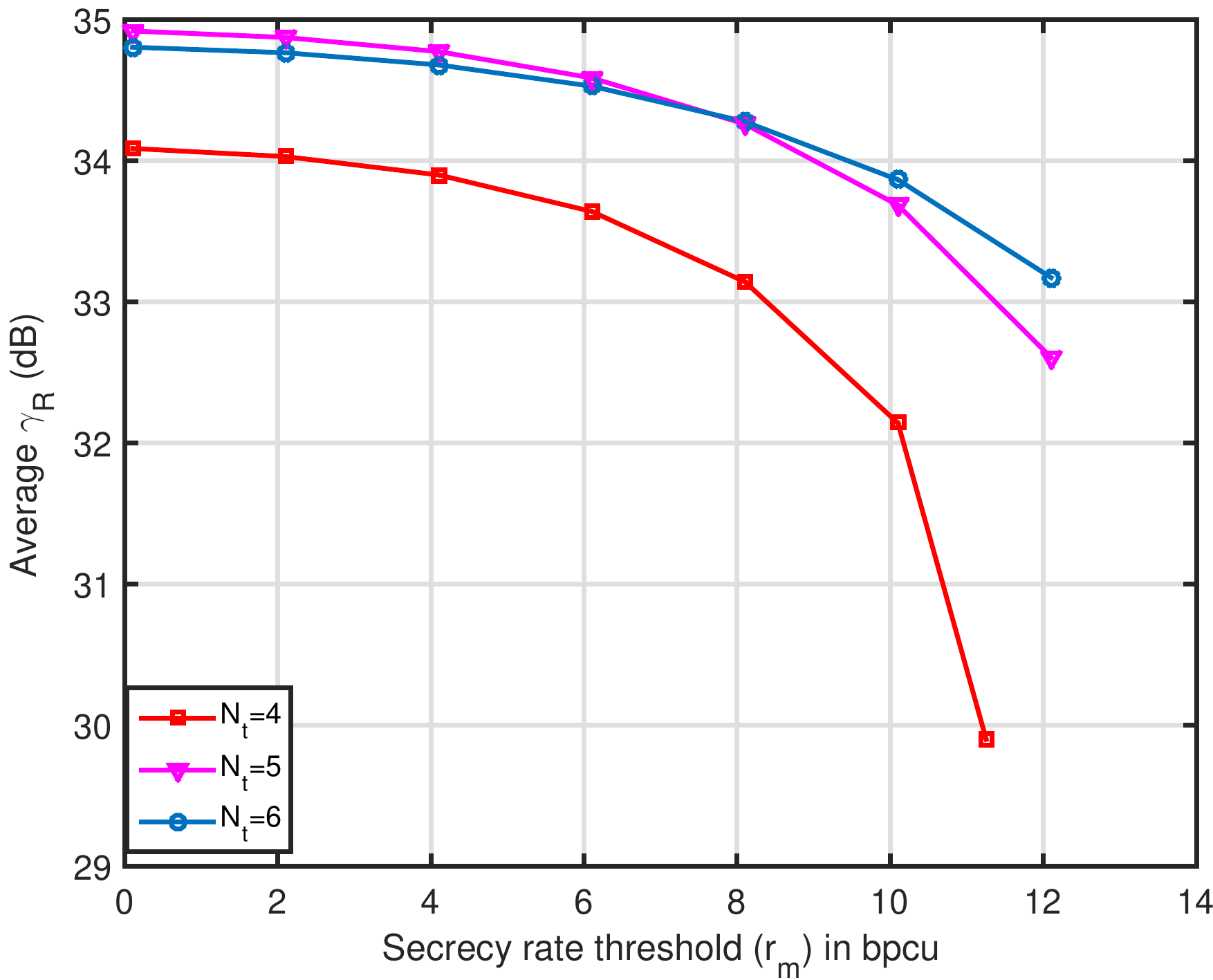}\hfill
\includegraphics[width=.45\linewidth]{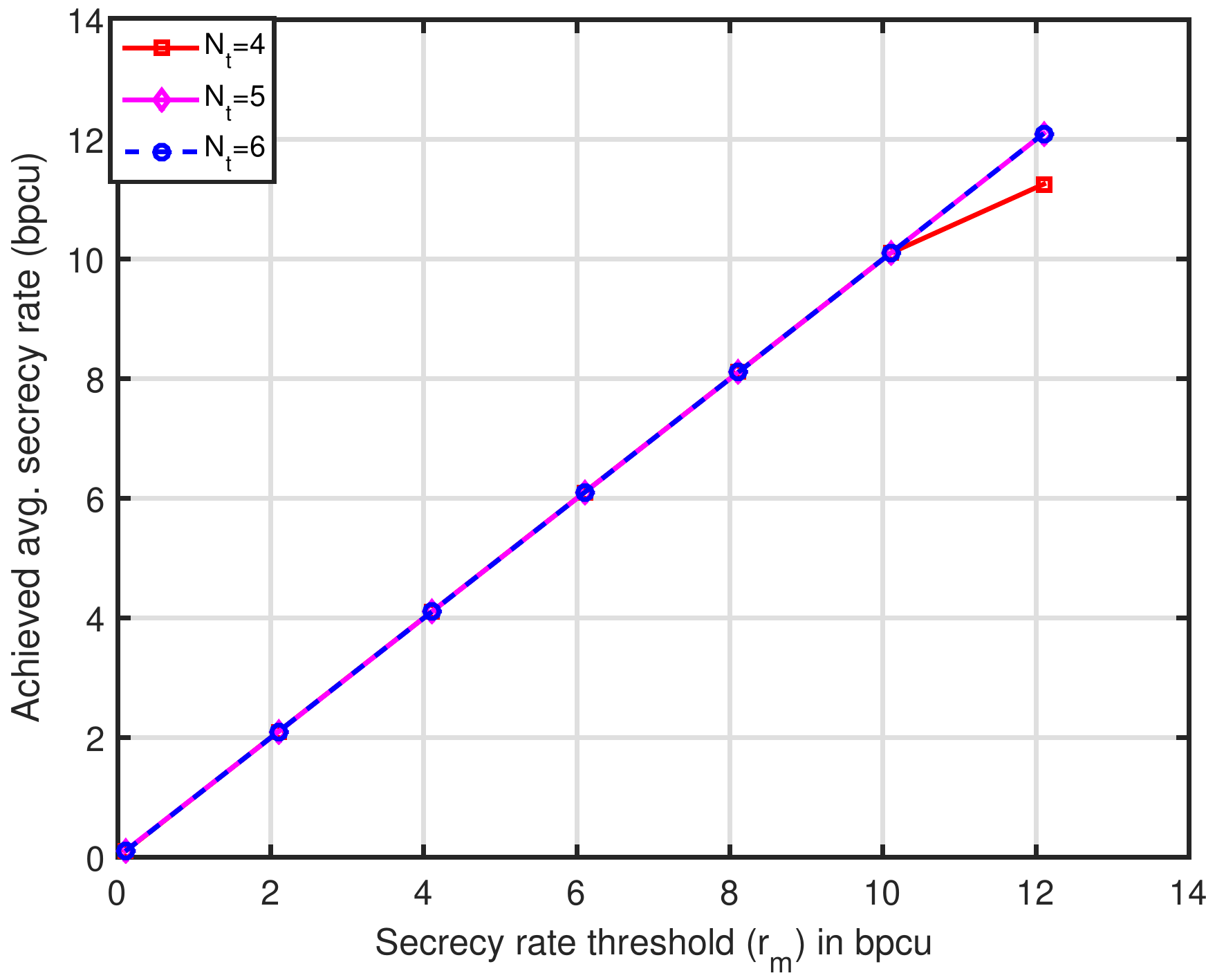}
\caption{Performance of {\bf Algorithm 2} for different $N_t$:  Left - Secrecy rate threshold versus SINR at the RR, Right -  Secrecy rate threshold vs achieved average secrecy rate}
\label{simRes2}
\end{figure}

Fig. \ref{simRes2} demonstrates the performance of the proposed semi-analytical approach ({\bf Algorithm 2}) for the same set of parameters as in Fig. \ref{simRes1}. Although the achieved SINR at the RR decreases when $r_m$ increases, the decrease in SINR (especially at larger values of $r_m$) is not rapid as in the SDP-based method. Moreover,  Fig. \ref{simRes2}-Right shows that the achieved average secrecy rate is same as $r_m$ except at its largest value when $N_t=4$. This is due to the fact that,  when $r_m$  increases, the feasibility of the semi-analytical approach decreases at a much smaller rate than the SDP-based method. In a nutshell, by comparing   Fig. \ref{simRes2}  and Fig. \ref{simRes1}, it is clear that the semi-analytical approach provides much better performance than the SDP based approach. This is a significant advantage since the computational complexity of the proposed semi-analytical approach is much less than that of the SDP-based method.

\begin{figure}[htb!]
\includegraphics[width=.45\linewidth]{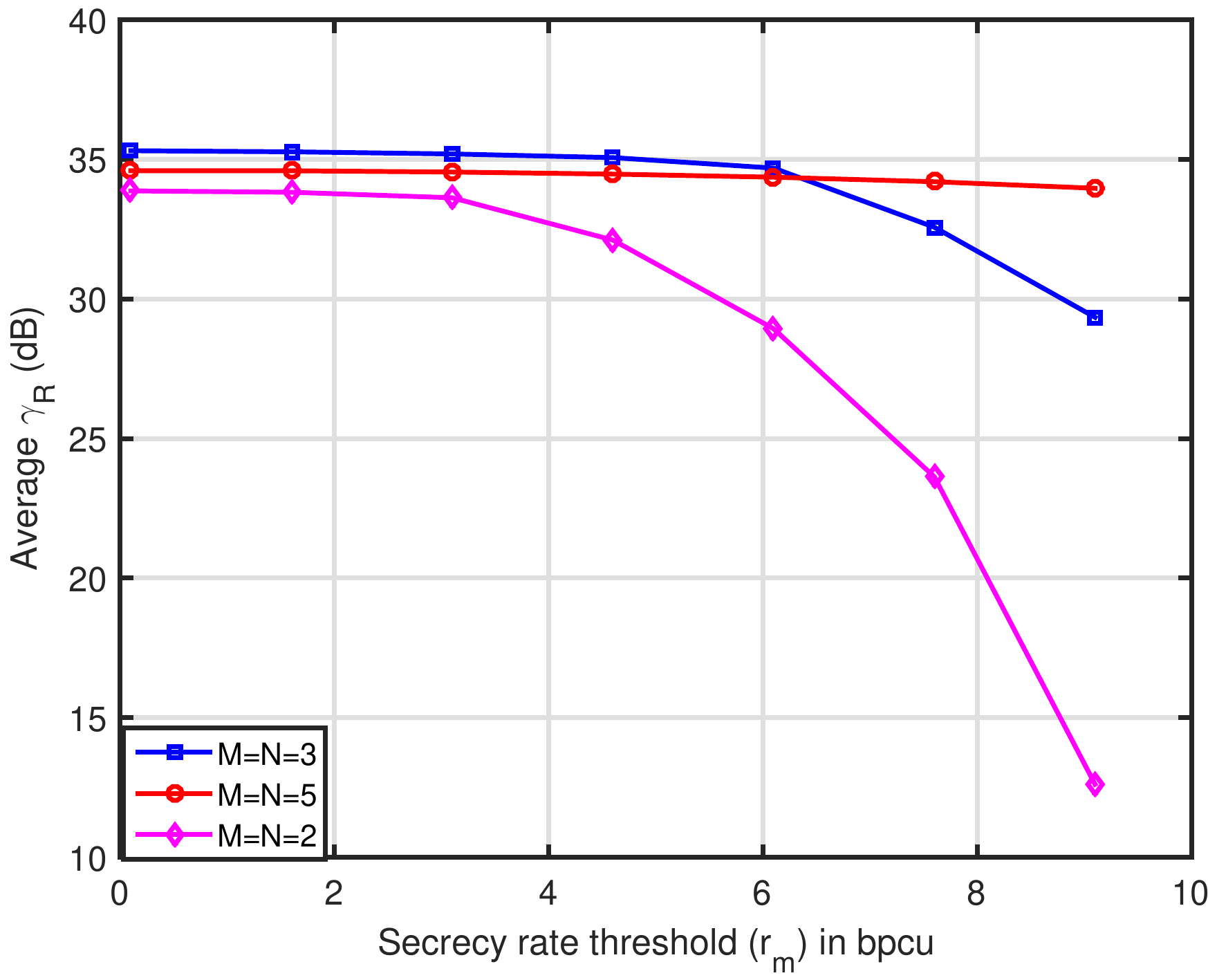}\hfill
\includegraphics[width=.45\linewidth]{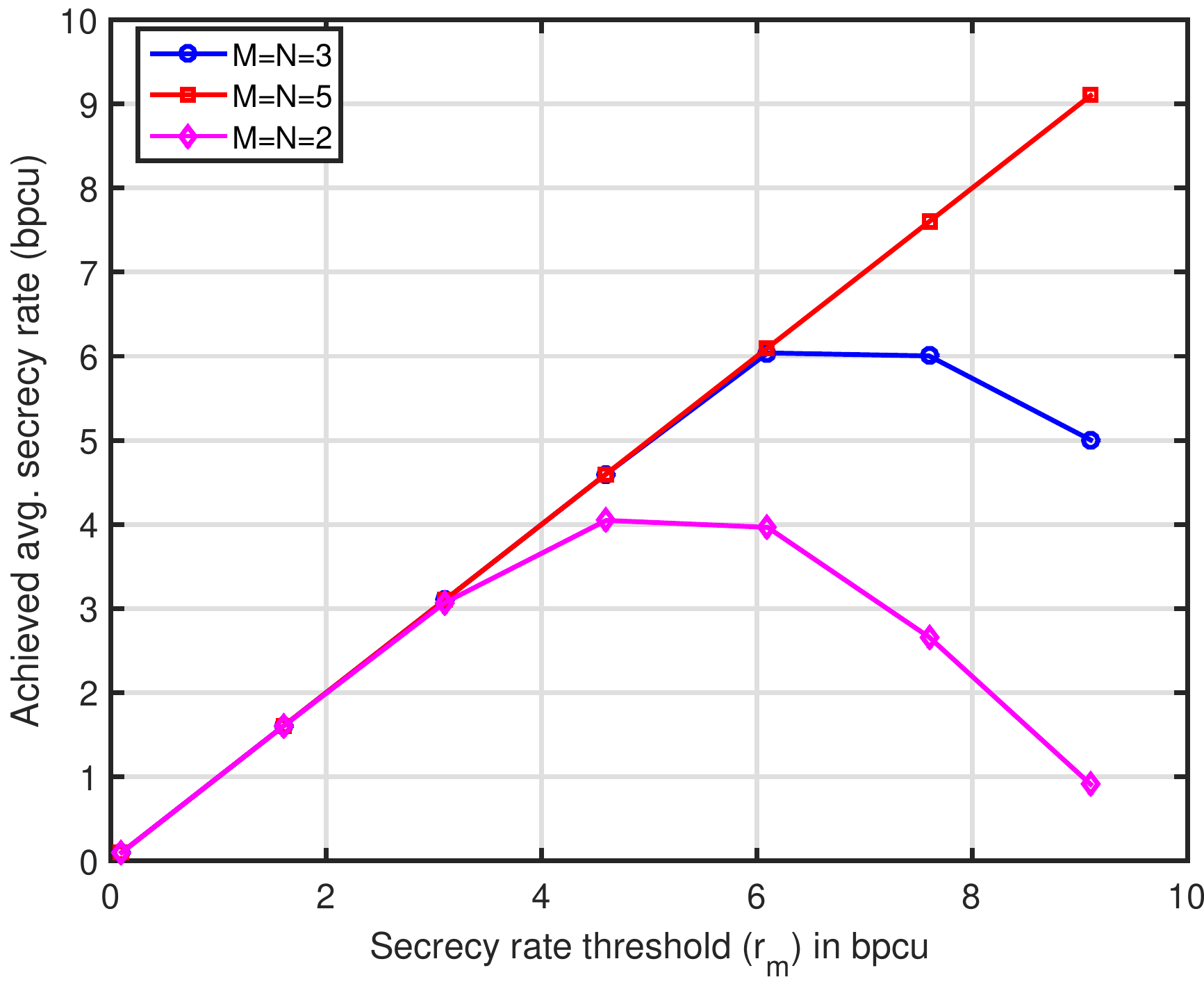}
\caption{Performance of {\bf Algorithm 1} for different $M=N$ and fixed $N_t$:  Left - Secrecy rate threshold versus SINR at the RR, Right -  Secrecy rate threshold vs achieved average secrecy rate}
\label{simRes3}
\end{figure}

The performance of the SDP-based method ({\bf Algorithm 1}) is shown in Fig. \ref{simRes3} for different values of $M=N$ and the fixed values of $N_t=5$ and $L=10$. As in  Fig. \ref{simRes1}, the performance of the proposed method drops rapidly at larger values of  $r_m$ and smaller values of $N=M$. This can be attributed to the fact that the feasibility of the SDP-based method decreases at a faster rate when $r_m$ takes larger values. However, as shown in  Fig. \ref{simRes4} (with the same setting as in  Fig. \ref{simRes3}) , the performance drop caused by higher infeasibility of the SDP-based method can be minimized by using the proposed semi-analytical approach. 

\begin{figure}[htb!]
\includegraphics[width=.45\linewidth]{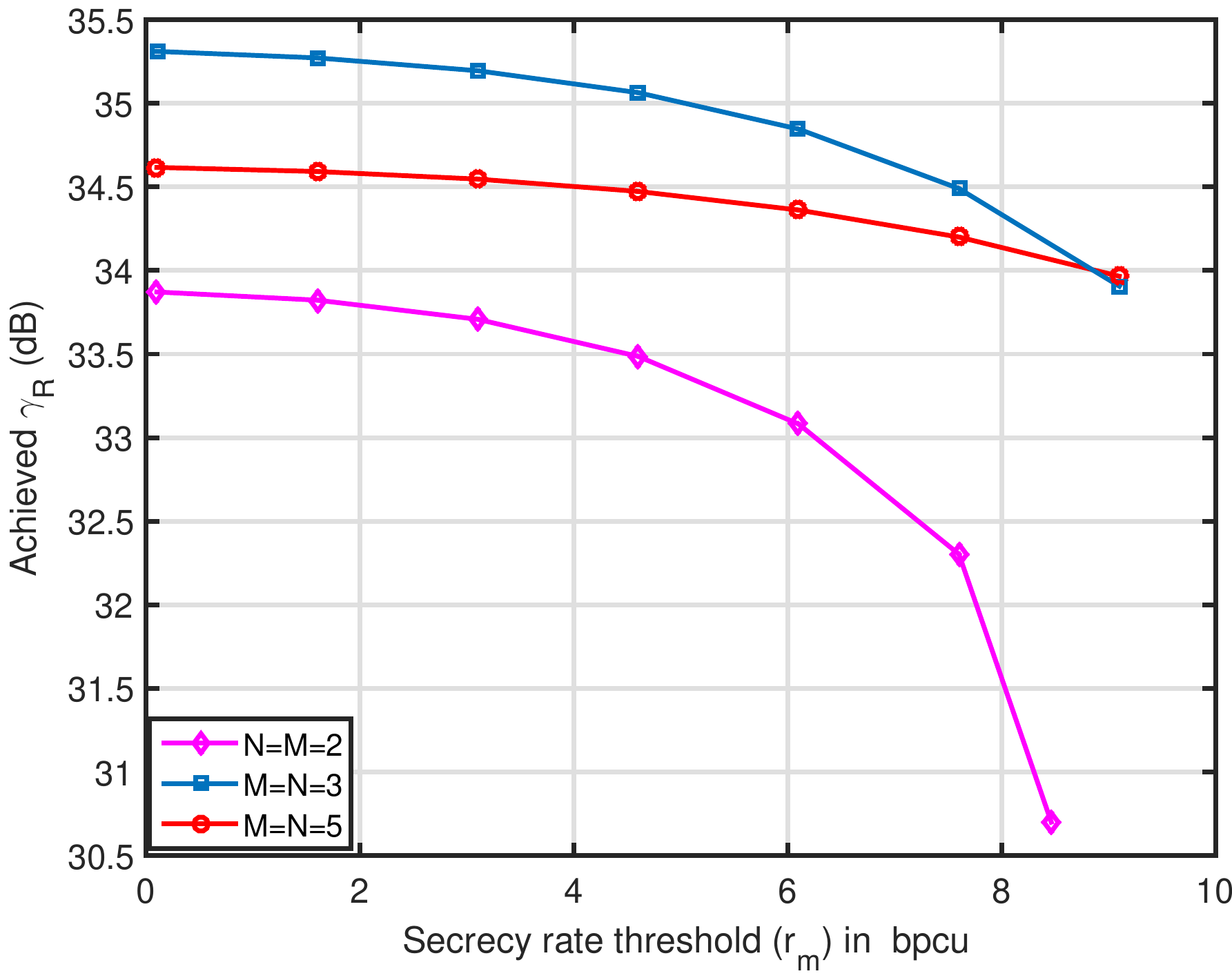}\hfill
\includegraphics[width=.45\linewidth]{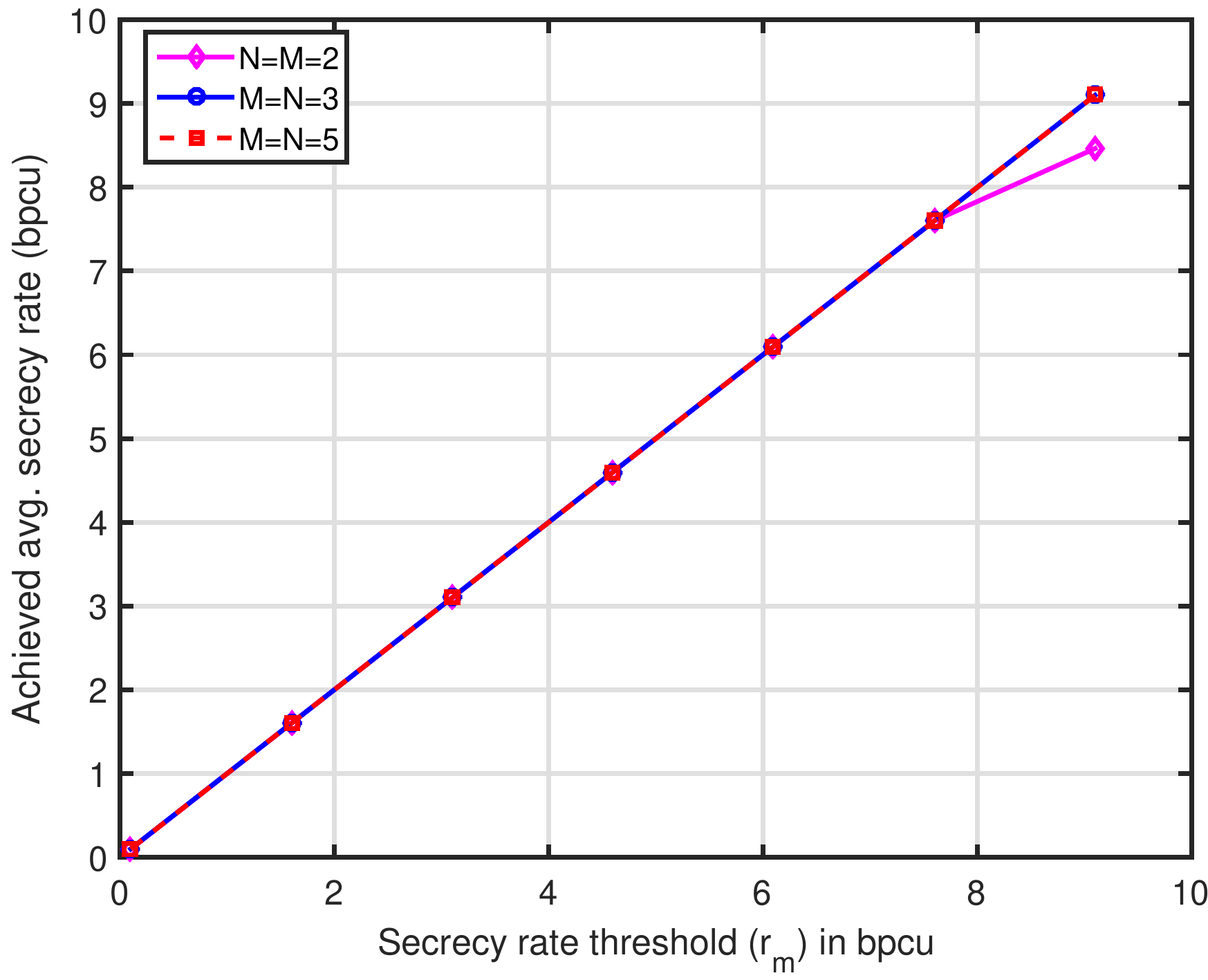}
\caption{Performance of {\bf Algorithm 2}  for different $M=N$ and fixed $N_t$:  Left - Secrecy rate threshold versus SINR at the RR, Right -  Secrecy rate threshold vs achieved average secrecy rate}
\label{simRes4}
\end{figure}

In Fig. \ref{simRes5}, we compare the performance of the proposed method for the case wherein radar waveforms and information signals occupy same sets of resources (overlapping case). We take $N_t=2$, $L=3$,  and change $M=N$ in this figure. It can be observed from this figure that the achieved SINR decreases with the larger values of  ${\tilde r}_m$ and the smaller value of $N=M$. Moreover, for the smaller value of $N=M$, the achieved average secrecy rate is smaller than the secrecy rate threshold, which suggests that the feasibility of SDR decreases with decreasing $N=M$.  
\begin{figure}[htb!]
\includegraphics[width=.45\linewidth]{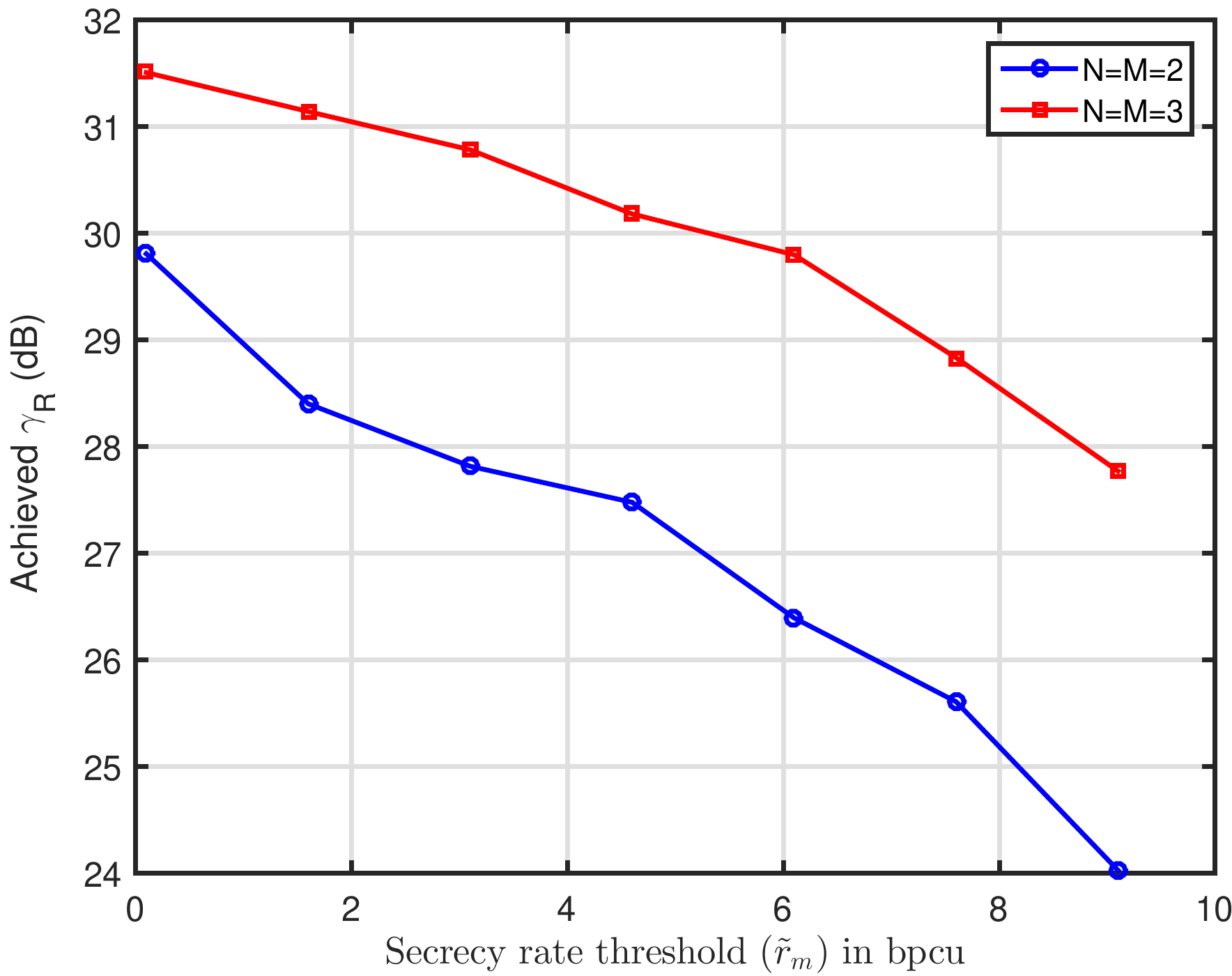}\hfill
\includegraphics[width=.45\linewidth]{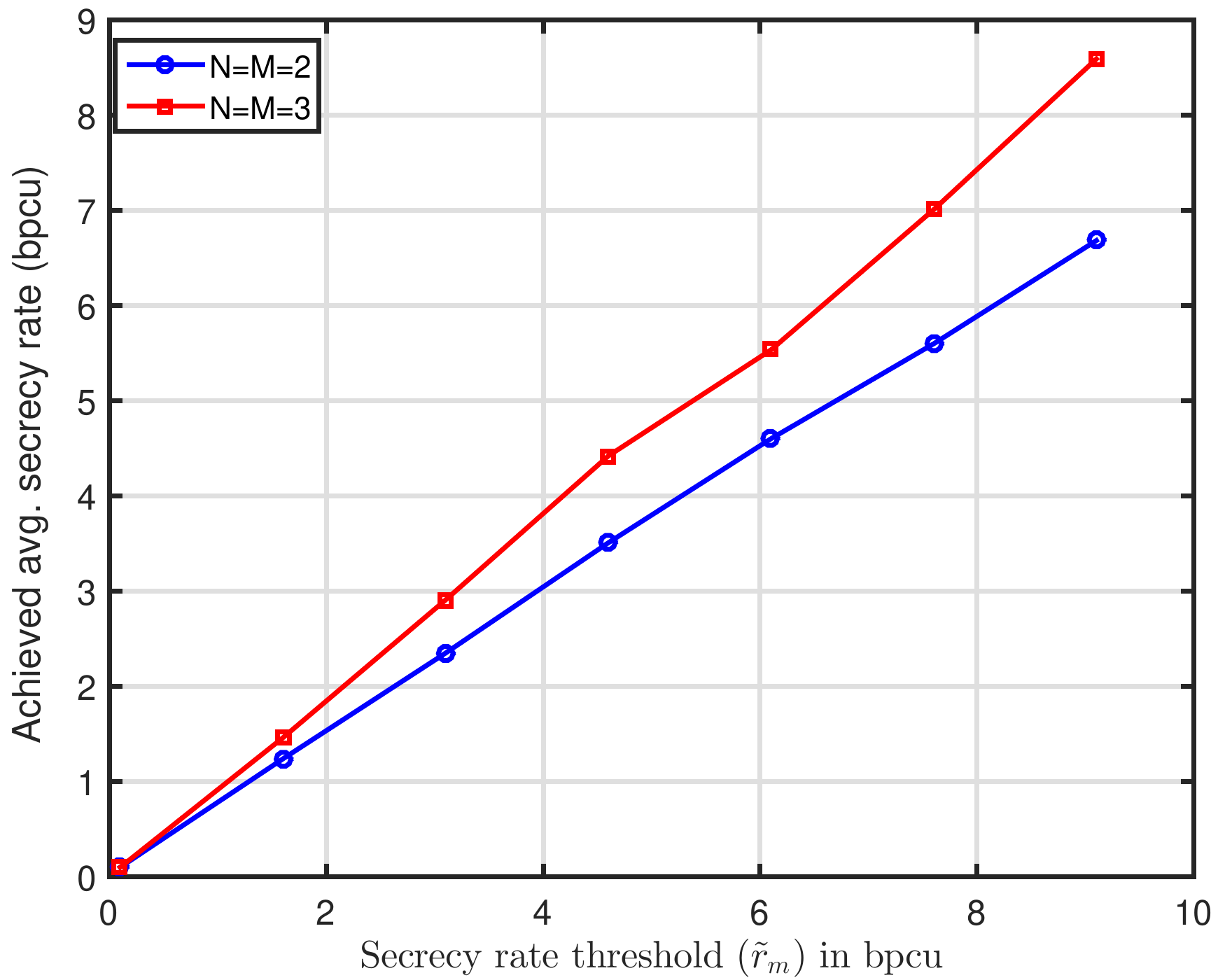}
\caption{Performance of the overlapping case:  Left - Secrecy rate threshold versus SINR at the RR, Right -  Secrecy rate threshold vs achieved average secrecy rate}
\label{simRes5}
\end{figure}

The performance between the overlapping and non-overlapping ({\bf Algorithm 2}) cases is shown in Fig. \ref{simRes6} for $N_t=2$, $L=3$, and $M=N=3$. It can be observed from this figure (Left side) that the achieved SINR drops significantly in the non-overlapping case for larger values of secrecy rate threshold. In particular, for the largest threshold value of 9.1 bpcu, {\bf Algorithm 2} turns out to be infeasible for all channel realizations. On the other hand, the SINR in the overlapping case remains relatively stable for all values of the secrecy rate threshold. Moreover, Fig. \ref{simRes6} (Right side) shows that the achieved average secrecy rate in overlapping case is almost close to the threshold, whereas the average secrecy rate in the non-overlapping case drops to zero at the largest threshold value of this setting. This shows that when the radar waveform and covariance matrix of information signals are jointly optimized, the overlapping case outperforms the non-overlapping case. 
\begin{figure}[htb!]
\includegraphics[width=.45\linewidth]{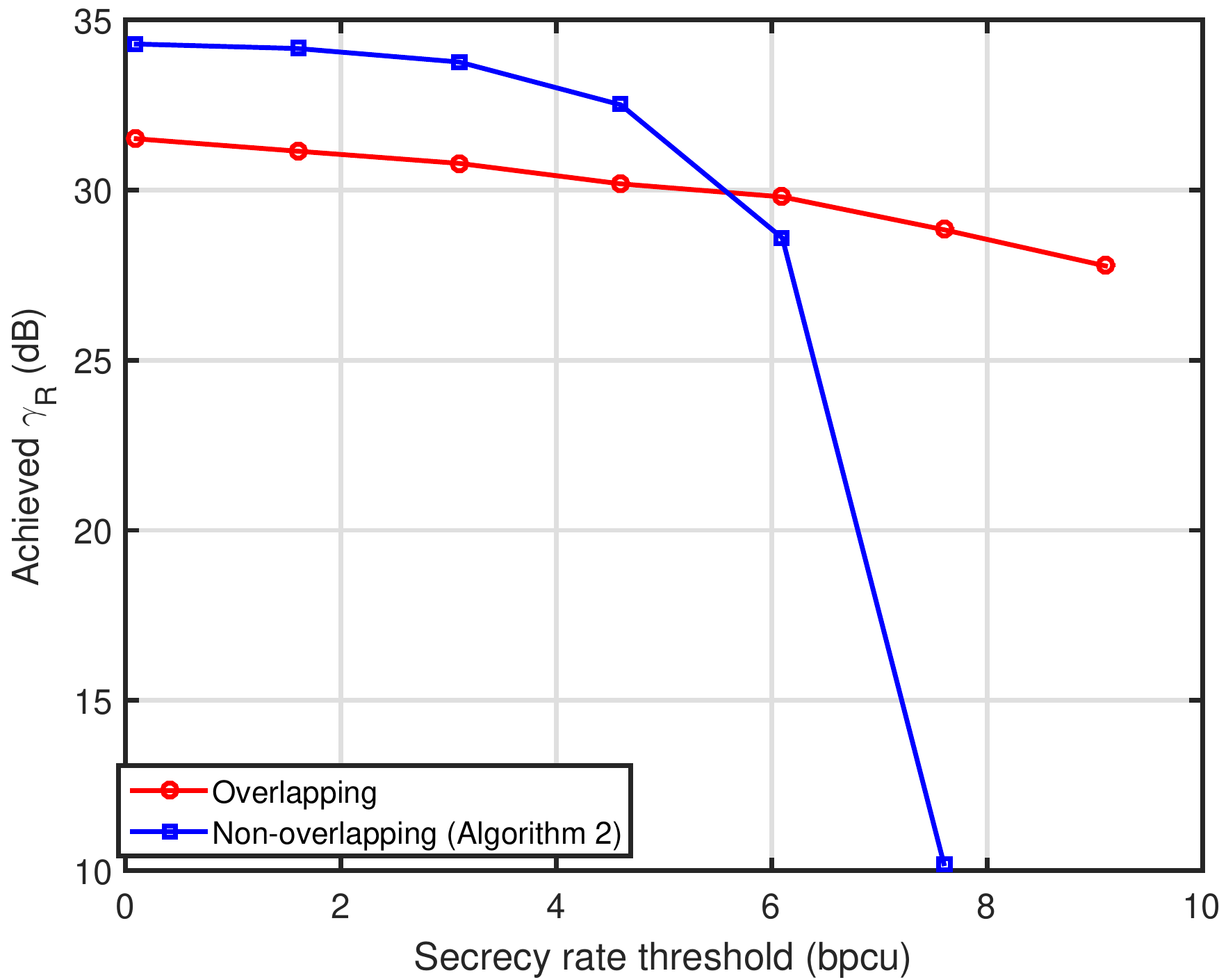}\hfill
\includegraphics[width=.45\linewidth]{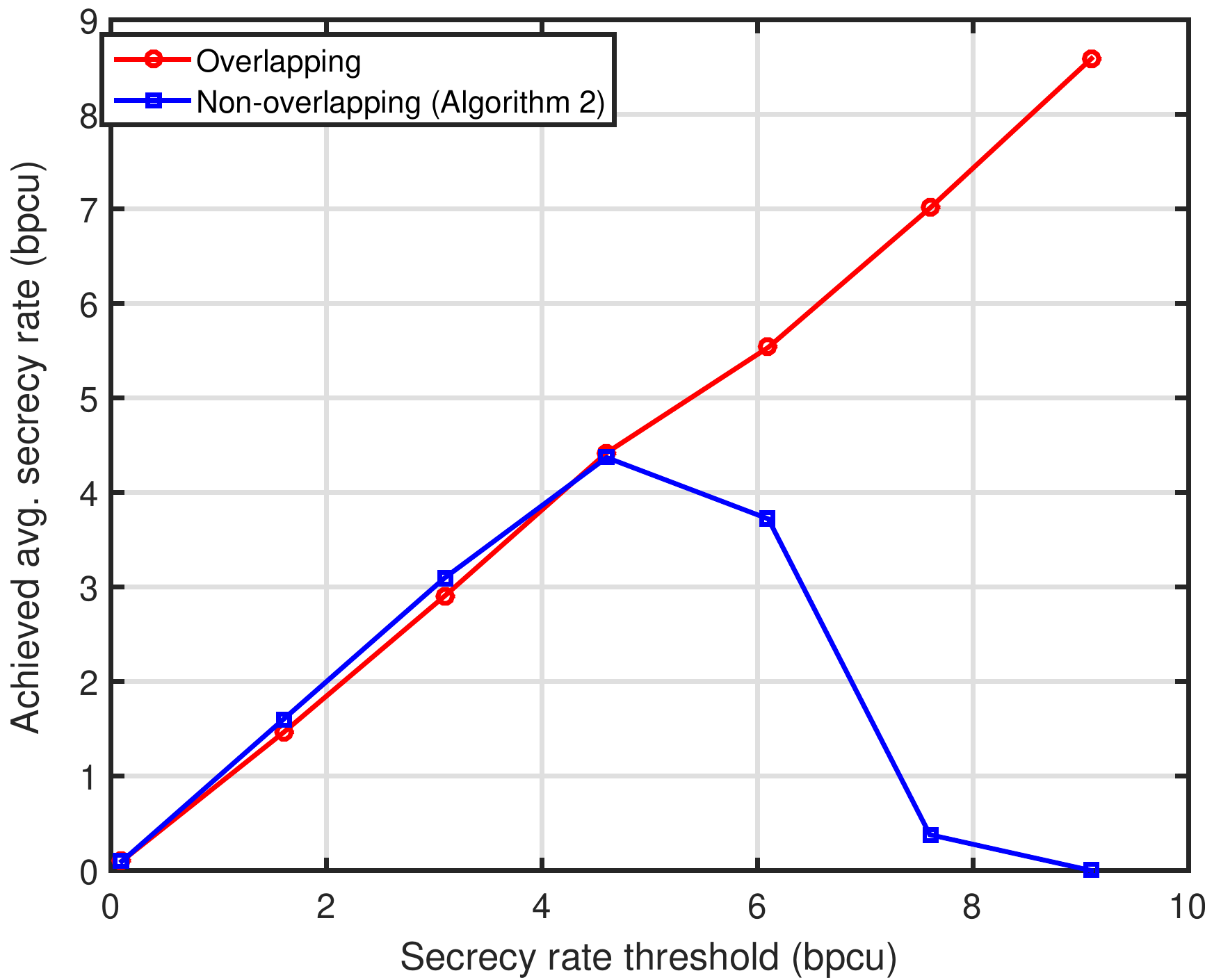}
\caption{Comparison between overlapping and non-overlapping cases:  Left - Secrecy rate threshold versus SINR at the RR, Right -  Secrecy rate threshold vs achieved average secrecy rate}
\label{simRes6}
\end{figure}

 \section{Conclusions}
 \label{sec5}
 In this paper, we analyzed the performance tradeoff between  radar and communications in a unified system consisting of  a transmitter, a passive radar receiver, and a communication receiver, all equipped with multiple antennas. The tradeoff was characterized by obtaining the boundaries of the signal-to-interference-and-noise ratio (SINR) for the radar receiver versus information secrecy rate region. To this end, optimization problems, with the objective of maximizing the SINR at the radar receiver while ensuring that the information secrecy rate is above a certain threshold, are formulated when radar and information signals use both non-overlapping and overlapping sets of resources. In both cases, iterative alternating optimization methods are proposed for optimizing the  radar waveforms and transmit covariance matrices of information signals. However, in the former case, in contrast to the  iterative approach that employs semi-definite programming (SDP), a computationally efficient semi-analytical approach was also proposed. Simulation results show that this approach provides significant performance gains over the SDP-based approach. In the latter (overlapping) case,  the optimization problem is non-tractable and challenging. However, it was reformulated as a semi-definite relaxation problem and solved iteratively under a framework of alternating optimization methods.

 \section*{Appendix : Proof of Proposition 1}
 The Lagrangian multiplier function  (\ref{eqn29}) can be expressed as
 \begin{eqnarray}
\label{eqn1Ap1}
{\mathcal L}({\bf Q}_c, \lambda)&=&{\rm tr}({\bf Q}_c)+\lambda {\rm tr}\left( {\bf H}_d^H{\bf Y}{\bf H}_d{\bf Q}_c\right)-\lambda  \log\left( {\rm det}\left({\bf H}_c{\bf Q}_c {\bf H}_c^H+\sigma_c^2{\bf I}_{M}\right)\right)+\nonumber\\
& &\lambda\left({\bar r}_m-\log\left({\rm det}({\bf Y})\right)-{\bar N}+\sigma_r^2{\rm tr}({\bf Y})\right).
\end{eqnarray}
 For a given $\lambda$ and ${\bf Y}\succeq 0$,  ${\mathcal L}({\bf Q}_c, \lambda)$ can be minimized from 
\begin{eqnarray}
\label{eqn2Ap1}
\min_{{\bf Q}_c\succeq 0} {\rm tr}\left( \left({\bf I}_{N_t}+\lambda {\bf H}_d^H{\bf Y}{\bf H}_d\right){\bf Q}_c \right)-\lambda  \log\left( {\rm det}\left({\bf H}_c{\bf Q}_c {\bf H}_c^H+\sigma_c^2{\bf I}_M\right)\right).
\end{eqnarray}
 Define ${\bf P}\triangleq  {\bf I}_{N_t}+\lambda {\bf H}_d^H{\bf Y}{\bf H}_d={\bf P}^{\frac{1}{2}}{\bf P}^{\frac{H}{2}}$, ${\bf Q}_c={\bf Q}_c^{\frac{1}{2}}{\bf Q}_c^{\frac{H}{2}}$, and ${\tilde {\bf Q}}_c^{\frac{1}{2}}={\bf P}^{\frac{H}{2}}{\bf Q}_c^{\frac{1}{2}}$. Substituting these relations (including ${\bf Q}_c^{\frac{1}{2}}={\bf P}^{-\frac{H}{2}}{\tilde {\bf Q}}_c^{\frac{1}{2}}$) into (\ref{eqn2Ap1}), it can be expressed in terms of ${\tilde {\bf Q}}_c$ as
 \begin{eqnarray}
\label{eqn3Ap3}
 \min_{{\tilde {\bf Q}}_c\succeq 0} {\rm tr}\left( {\tilde {\bf Q}}_c \right)-\lambda  \log\left( {\rm det}\left({\bf H}_c {\bf P}^{-\frac{H}{2}}{\tilde {\bf Q}}_c {\bf P}^{-\frac{1}{2}} {\bf H}_c^H+\sigma_c^2{\bf I}_M\right)\right).
\end{eqnarray}
Clearly, the minimum of (\ref{eqn3Ap3}) is obtained when  ${\rm det}\left({\bf H}_c {\bf P}^{-\frac{H}{2}}{\tilde {\bf Q}}_c {\bf P}^{-\frac{1}{2}} {\bf H}_c^H+\sigma_c^2{\bf I}_M\right)$ is maximized, which happens when Hadamard inequality \cite{HornJohnson} is satisfied with equality. This implies that the optimum ${\tilde {\bf Q}}_c$ will be such that ${\bf H}_c {\bf P}^{-\frac{H}{2}}{\tilde {\bf Q}}_c {\bf P}^{-\frac{1}{2}} {\bf H}_c^H+\sigma_c^2{\bf I}_M$ turns to a diagonal matrix. To this end, 
let the singular value decomposition (SVD) of  ${\bf H}_c {\bf P}^{-\frac{H}{2}}$ be given by  ${\bf H}_c {\bf P}^{-\frac{H}{2}}={\bf U}{\boldsymbol \Sigma}{\bf V}^H$, where ${\bf U}$ and ${\bf V}$ are $M\times M$ and $N_t \times N_t$  unitary matrices and 
 ${\boldsymbol \Sigma}$ is a diagonal matrix of elements $\{ d_i \}_{i=1}^{r}$, where $d_i>0$ and $r=\min(M, N_t)$. Substituting the SVD of  ${\bar {\bf H}}_c {\bf P}^{-\frac{H}{2}}$  into (\ref{eqn3Ap3}), we get
 \begin{eqnarray}
\label{eqn4Ap3}
 \min_{{\tilde {\bf Q}}_c\succeq 0} {\rm tr}\left( {\tilde {\bf Q}}_c \right)-\lambda  \log\left( {\rm det}\left({\bf U}{\boldsymbol \Sigma}{\bf V}^H{\tilde {\bf Q}}_c {\bf V} {\boldsymbol \Sigma}^T{\bf U}^H+\sigma_c^2{\bf I}_M\right)\right),
\end{eqnarray}
which can be simplified to
 \begin{eqnarray}
\label{eqn5Ap3}
  \min_{{\tilde {\bf Q}}_c\succeq 0} {\rm tr}\left( {\tilde {\bf Q}}_c \right)-\lambda  \log\left( {\rm det}\left({\boldsymbol \Sigma}{\bf V}^H{\tilde {\bf Q}}_c {\bf V} {\boldsymbol \Sigma}^T+\sigma_c^2{\bf I}_M\right)\right).
\end{eqnarray}
For ${\bf V}^H {\tilde {\bf Q}}_c  {\bf V}$ to be diagonal, $ {\tilde {\bf Q}}_c $ must be ${\bf V}{\boldsymbol \Lambda} {\bf V}^H$, where ${\boldsymbol \Lambda}={\rm diag}(\mu_1, \cdots, \mu_{N_t})$. Substitution of such  $ {\tilde {\bf Q}}_c$  leads to the following equivalent optimization problem
\begin{eqnarray}
\label{eqn6Ap3}
  \min_{\{\mu_i\}_{i=1}^{r}} \sum_{i=1}^{r} \mu_i-\lambda \sum_{i=1}^{r}\log(\mu_i d_i^2+\sigma_c^2),
\end{eqnarray}
where $\{\mu_j\}_{j=r+1}^{N_t}=0$ can be chosen without any loss of generality. Solving (\ref{eqn6Ap3}) in terms of $\mu_i$, we get
\begin{eqnarray}
\mu_i=\left[ \lambda -\frac{\sigma_c^2}{d_i^2}\right]^{+},  \forall i, 
\end{eqnarray}
where $[x]^{+}=\max(0, x)$. The proof of Proposition 1 is complete.

\small{ 
 
}
 
 \end{document}